\newcommand{\M}{\mathbin{\mathbf{M}}} 
\def\Masking{\preceq_{m}}
\def\FMask{f_{m}}
\def\Refuter{\mathsf{R}}
\def\Verifier{\mathsf{V}}
\def\Probabilistic{\mathsf{P}}
\def\ErrorSt{v_{\text{err}}}
\def\ErrorStSG{\ErrorSt^{\SymbG_{A,A'}}}
\def\InitVertex{v_0^\mathcal{G}}
\def\InitVertexSG{v_0^{\SymbG_{A,A'}}}
\def\SigmaF{\Sigma_{\mathcal{F}}}
\def\val{\mathop{\textup{val}}}
\def\support#1{\mathit{Supp}(#1)}
\def\image#1{\mathit{Im}(#1)}
\def\faults{\mathcal{F}}
\def\RFP{\mathit{RFP}}
\def\Prob#1#2{\mathit{Prob}^{#1, #2}}
\def\couplings#1#2{\mathbb{C}(#1,#2)}
\def\vertices#1{\mathbb{V}(#1)}
\def\pr#1#2{{#2}[{#1}]}
\DeclareMathOperator*{\argmax}{argmax}
\def\RelCoupling{R^{\#}}
\newcommand{\MaskCoup}{\mathbin{\mathbf{M^{\#}}}} 
\def\Expect#1#2{\mathbb{E}^{#1, #2}}
\def\SymbG{\mathcal{SG}}
\def\StochG{\mathcal{G}}
\def\Stochgame#1{\mathcal{#1}}
\def\MilestoneG{\mathcal{MG}}
\def\post{\mathit{Post}}
\def\pre{\mathit{Pre}}
\def\AFairpre{\forall\mathit{Pre}_f}
\def\EFairpre{\exists\mathit{Pre}_f}
\def\SymbAFairpre{\forall\mathit{Pre}^S_f}
\def\SymbEFairpre{\exists\mathit{Pre}^S_f}
\def\Eq{\mathit{Eq}}
\def\out{\mathit{out}}
\def\Dist{\mathcal{D}}
\def\Dirac{\Delta}
\def\reward{\mathit{r}}
\def\LP{\mathit{LP}}
\def\tuple#1{{\langle{#1}\rangle}}
\def\strat#1{\pi_{#1}}
\def\starredstrat#1{\pi^*_{#1}}
\def\Memoryless{\mathsf{M}}
\def\Deterministic{\mathsf{D}}
\def\upperbound{\mathsf{u}}
\def\Strategies#1{\Pi_{#1}}
\mathchardef\mhyphen="2D
\newcommand*{\xrightarrowprime}[2][]{\mathrel{{\xrightarrow[#1]{#2}}{}'}}
\def\PRISM{\textsf{PRISM}}
\def\LTL{\textsf{LTL}}
\def\Bellman{L}
\renewenvironment{proof}{\noindent\textbf{Proof.}}{\hfill$\diamond$}
\definecolor{lightblue}{RGB}{231,255,255}
\definecolor{lightred}{RGB}{255,224,224}
\definecolor{lightgreen}{RGB}{224,255,224}
\definecolor{lightyellow}{RGB}{255,255,224}
\definecolor{lightpurple}{RGB}{255,224,255}
\definecolor{darkerred}{RGB}{64,0,0}
\definecolor{darkred}{RGB}{128,0,0}
\definecolor{darkblue}{RGB}{0,0,128}
\definecolor{darkgreen}{RGB}{0,128,0}
\definecolor{darkpurple}{RGB}{128,0,128}
\definecolor{black}{RGB}{0,0,0}
\newcommand{\colorpar}[3]{\colorbox{#1}{\parbox{#2}{#3}}}
\newcommand{\marginremark}[3]{\marginpar{\colorpar{#2}{\linewidth}{\color{#1}#3}}}
\newcommand{\remarkPC}[1]{\marginremark{darkblue}{lightblue}{\scriptsize{[PC]~ #1}}}
\newcommand{\highlightedremark}[4]{%
  \begin{center}\fcolorbox{#1}{#2}{%
  \begin{minipage}{.98\linewidth}\color{#1}%
  \textbf{\THICKhrulefill[ #3 ]\THICKhrulefill}%
  \par\noindent#4\end{minipage}}\end{center}%
}
\def\THICKhrulefill{\leavevmode \leaders \hrule height 5pt\hfill \kern \z@}
\newcommand{\hrmkRD}[1]{\highlightedremark{darkgreen}{lightgreen}{RD}{#1}}
\title{A Stochastic Game Approach to Masking Fault-Tolerance: Bisimulation and Quantification} 
\titlerunning{A Stochastic Game Approach to Masking Fault-Tolerance}
\author{Pablo F. Castro \inst{1,3} \and
Pedro R. D'Argenio \inst{2,3,4} \and \\
Luciano Putruele \inst{1,3}  \and 
Ramiro Demasi \inst{2,3}
}
\authorrunning{P.F. Castro et al.}
\institute{Departamento de Computaci\'on, FCEFQyN, Universidad Nacional de 
  R\'{\i}o Cuarto,
  Argentina
  \and 
  FAMAF, Universidad Nacional de C\'ordoba
  Argentina,
  \and
  Consejo Nacional de Investigaciones Cient\'ificas y T\'ecnicas (CONICET), Argentina \and
  Saarland University, Saarland Informatics Campus,
  Germany}
\begin{document}

\maketitle

\begin{abstract}
We introduce a formal notion of masking fault-tolerance  between
probabilistic transition systems based on a variant of probabilistic
bisimulation (named masking simulation).  We also provide the corresponding probabilistic game
characterization.  Even though these games could be infinite, we propose
a symbolic way of representing them, such that it can be decided in polynomial time 
if there is a masking simulation between two probabilistic transition systems.
%
We use this notion of masking  to quantify the level of masking fault-tolerance 
exhibited by almost-sure failing systems, i.e., those systems that eventually fail with probability $1$.  
The level of masking fault-tolerance of almost-sure failing systems can be calculated by solving a 
collection of functional equations.
We produce this metric in a setting in which the minimizing player behaves in a strong fair way (mimicking the idea of fair environments), and limit our study to memoryless strategies due to the infinite nature of the game.
We implemented these ideas in a prototype tool,  and performed an experimental evaluation.
\end{abstract}

\setcounter{page}{1}

\section{Introduction} \label{sec:intro}

Fault-tolerance is one important characteristic of modern software.  This is particularly true for critical software like banking software, 
automotive applications, communication protocols,  avionics software, etc.
However, in practice, it is hard to quantify the level of fault-tolerance provided by computing systems.  In most cases fault-tolerant systems are built using  ad-hoc techniques which are 
based on experience and, many times, lack a mathematical foundation.
Furthermore, faults usually have a probabilistic nature.  Therefore, concepts coming from probability theory become necessary when developing fault-tolerant software.  In this paper we
provide a framework aimed at analysing the fault-tolerance exhibited by concurrent probabilistic systems.  This encompasses  the probability of occurrence of faults as well as the use
of randomized algorithms for improving the fault-tolerance of systems.
	
In practice, there are different types of fault-tolerance,  \emph{masking fault-tolerance} (when both the safety and liveness properties are preserved under the occurrence of faults), 
\emph{non-masking fault-tolerance} (when only liveness properties are preserved) and \emph{failsafe fault-tolerance} (when only safety properties are preserved). 
Among them, masking fault-tolerance is often acknowledged as the most desirable kind of fault-tolerance, because all the properties of the nominal (i.e., non-faulty) system are preserved under faulty behavior. 
However, in many settings, requiring full masking fault-tolerance is unrealistic.  In particular, for those systems that are not designed to terminate and the degradation of hardware, or software, components will
 eventually lead to a failure  (i.e., a behavior that deviates from the expected system's behavior). One of the main applications of the framework described in the forthcoming sections is the quantification of the 
 amount of masking fault-tolerance provided by systems before they enter into a failure.  This measure provides a tool for selecting a fault tolerance mechanism, or for balancing multiple mechanisms (e.g.,  to which extent it is worth the cost of efficient hardware redundancy against time demanding software artifacts).
	
	During the last decade, significant progress has been made towards 
defining suitable metrics or distances for diverse types of quantitative 
models including real-time systems \cite{HenzingerMP05}, probabilistic 
models \cite{Bacci0LM17,BacciBLMTB19,DesharnaisGJP04,DesharnaisLT11,TangB18}, 
and metrics for linear and branching systems \cite{CernyHR12,AlfaroFS09,Henzinger13,LarsenFT11,ThraneFL10}. 
Some authors have already pointed out that these metrics can be useful to reason 
about the robustness and correctness of a system, notions related to fault-tolerance. 
In \cite{CastroDDP18b}, we presented a notion of masking fault-tolerance 
between systems built on a  simulation relation and a corresponding 
game representation with quantitative objectives. In this paper we review these ideas in a probabilistic setting, and define a probabilistic version of 
this characterization of masking fault-tolerance.

More specifically,  we start characterizing probabilistic masking fault-tolerance via a variant of probabilistic bisimulation.  This \emph{masking simulation} relates two probabilistic transition systems. The first one acts as a specification  of the intended behavior (i.e., nominal model) and the 
second one as the fault-tolerant implementation (i.e., the extended model with  faulty behavior). The existence of a masking simulation implies 
that the implementation masks the faults. This simulation relation can be captured as a stochastic game played by a 
\emph{Verifier} and a \emph{Refuter}. If the \emph{Verifier} wins, there is a probabilistic masking simulation.
If instead the \emph{Refuter} wins, the implementation is not masking fault-tolerant.
	 These games rely on the notion of  couplings between probabilistic distributions and, as a consequence, the numbers of vertices
of the game graphs is infinite. To tackle this problem, we introduce a symbolic representation of these games 
where the couplings are symbolically captured by means of equation systems.
The size of this symbolic graph 
is polynomial on the size of the input systems. Moreover, the simulation games
can be solved via their symbolic representation.

In practice masking fault-tolerance comes in a 
quantitative fashion, and thus we enrich the games with quantitative objectives.  This makes it possible to quantify the 
amount of masking tolerance provided by the implementations.
We focus on games that almost-surely fail when the Refuter plays
fairly (i.e., those systems that eventually fail with probability
$1$).  Due to the infinite nature of the games, we restrict this
result to randomized memoryless strategies and show that the game is
determined under these conditions.
We show that the problem of deciding if the game is almost surely failing under fairness is polynomial.  Moreover, the value of the game can be computed 
by solving a collection of functional equations via a \emph{Value Iteration} 
algorithm \cite{ChatterjeeH08,Condon90,Condon92,KelmendiKKW18}.  We take such a value as the measure of fault-tolerance.


Summarizing our contribution,
\begin{inparaenum}[(1)]
\item%
  we define a notion of probabilistic masking simulation,
\item%
  provide a game characterization for it, and
\item%
  show that it can be decided in polynomial time
  (Sec.~\ref{sec:prob_mask_dist}).
Moreover, in Sec~\ref{sec:prob_almost_sure},
\item%
  we define an extension of the game with rewards and provide
  a payoff function that counts the number of ``milestones'' achieved
  by the implementation;
\item%
  we show that the game is determined provided it is almost-surely
  failing under fairness and memoryless strategies, and
\item%
  provide an algorithm to calculate it.
In addition,
\item%
  we provide a polynomial time algorithm to decide if a game
  is almost-surely failing under fairness.
\item%
  We finally present an experimental evaluation on some well-known
  case studies (Sec.~\ref{sec:experimental_eval}).
\end{inparaenum}
Full details and proofs can be found in the Appendix.

\section{Preliminaries} \label{sec:background}

We first introduce some basic definitions that will be necessary across the paper.

A (discrete) \emph{probability distribution} $\mu$ over a denumerable 
set $S$ is a function $\mu: S \rightarrow [0, 1] $  such that 
$\mu(S) \triangleq \sum_{s \in S} \mu(s) = 1$. 
Let $\Dist(S)$ denote the set of all probability distributions on $S$. $\Dirac_s \in \Dist(S)$ denotes the Dirac distribution for $s$, i.e., 
$\Dirac_s(s) =1$ and $\Dirac_s(s') = 0$ whenever $s'\neq s$.
The \textit{support} set of $\mu$ is defined by $\support{\mu} = \{s |~\mu (s) > 0\}$.

A \emph{Probabilistic Transition System} (PTS) is a structure $A = \langle S, \Sigma, \rightarrow, s_0 \rangle$ 
where 
\begin{inparaenum}[(i)]
\item%
  $S$ is a denumerable set of \emph{states} containing the
  \emph{initial state} $s_0 \in S$,
\item%
  $\Sigma$ is a set of \emph{actions}, and
\item%
  ${\rightarrow} \subseteq S \times \Sigma \times \Dist(S)$ is the
  \emph{(probabilistic) transition relation}.
\end{inparaenum}
We assume that there is always some transition leaving
from every state.


A distribution $w \in \Dist(S \times S')$ is a \textit{coupling} for $(\mu, \mu')$, with 
$\mu \in \Dist(S)$ and $\mu' \in \Dist(S')$, if $w(S, \cdot) = \mu'$ and 
$w(\cdot, S') = \mu$. $\couplings{\mu}{\mu'}$ denotes the set of all couplings for $(\mu, \mu')$.  It is worth
noting that this defines a polytope. $\vertices{\couplings{\mu}{\mu'}}$ denotes the set of all vertices of the corresponding polytope.
This set is finite if $S$ and $S'$ are finite.
For $R \subseteq S \times S'$, we say that a coupling $w$ for $(\mu, \mu')$ 
\textit{respects} $R$ if $\support{w} \subseteq R$ (i.e., $w(s, s') > 0 \Rightarrow s~R~s'$).
We define $\RelCoupling \subseteq \Dist(S) \times \Dist(S')$ by $\mu~\RelCoupling~\mu'$ if and only if
there is an $R$-respecting coupling for $(\mu, \mu')$.

A \emph{stochastic game graph} \cite{ChatterjeeH12} is a tuple $\mathcal{G} = ( V, E, V_1, V_2, V_\Probabilistic, v_0, \delta  ) $, where $V$ is a set of vertices with $V_1, V_2, V_\Probabilistic \subseteq V$ being a partition of $V$, $v_0\in V$ is the initial vertice,  $E\subseteq V \times V$,
and $\delta : V_\Probabilistic  \rightarrow \Dist(V)$ is a probabilistic transition function such that,  for all $v \in V_\Probabilistic$ and $v' \in V$:  $(v,v') \in E$ iff $v' \in \support{\delta(v)}$. 
If $V_\Probabilistic = \emptyset$, then $\mathcal{G}$ is called a $2$-player game graph. Moreover, if $V_1 = \emptyset$ or $V_2 = \emptyset$, then $\mathcal{G}$ is a \emph{Markov Decision Process} (or MDP). Finally, in case that $V_1= \emptyset$ and $V_2 = \emptyset$, $\mathcal{G}$ is a \emph{Markov chain} (or MC). For all states $v \in V$ we define $\post(v) = \{v' \in V \mid (v,v') \in E\}$, the set of successors of $v$. Similarly, $\pre(v') = \{v \in V \mid (v,v') \in E \}$ as the set of predecessors of $v'$. 
We assume that $\post(s) \neq \emptyset$ for every $v \in V_1 \cup V_2$. 

Given a game as defined above, a \emph{play} is an infinite sequence $\omega =  \omega_0, \omega_1, \dots$ such that $(\omega_k, \omega_{k+1}) \in E$ for every $k \in \mathbb{N}$.  The set of all plays is denoted by $\Omega$,
and the set of plays starting at vertex $v$ is written $\Omega_v$. A \emph{strategy} (or policy) for Player $i\in\{1,2\}$ is a function $\strat{i}: V^*  \cdot V_i \rightarrow \Dist(V)$ that assigns a probabilistic distribution to each finite sequence of states such that $\support{\strat{i}(w\cdot v)}\subseteq\post(v)$ for all $w\in V^*$ and $v\in V_i$.   The set of all the strategies for Player $i$ is named $\Strategies{i}$. A strategy $\strat{i}$ is said to be  \emph{pure} if, for every $w\in V^*$ and $v\in V_i$, $\strat{i}(w \cdot v)$  is a Dirac distribution, and it is called \emph{memoryless} if $\strat{i}(w  \cdot v) = \strat{i}(v)$, for every $w \in V^*$ and $v\in V_i$. 
Given two strategies $\strat{1} \in \Strategies{1}$, $\strat{2} \in \Strategies{2}$ and a starting state $v$,  the \emph{result} of the game  is a Markov chain, denoted by $\mathcal{G}^{\strat{1}, \strat{2}}_v$. 
As any Markov chain, $\mathcal{G}^{\strat{1}, \strat{2}}_v$ defines a
probability measure
$\Prob{\strat{1}}{\strat{2}}_{\Stochgame{G},v}$. 
An event $\mathcal{A}$ is a measurable set in the Borel $\sigma$-algebra generated by the cones of $\Omega$.
Thus, $\Prob{\strat{1}}{\strat{2}}_{\Stochgame{G},v}(\mathcal{A})$ is
the probability that strategies $\strat{1}$ and $\strat{2}$ generate a
play belonging to $\mathcal{A}$ from state $v$.
It would normally be convenient to use {\LTL} notation to define events. For instance,
$\Diamond V' =  \{ \omega = \omega_0,\omega_1,\dots \in \Omega \mid  \exists i : \omega_i \in V' \}$ defines the event in which some state in $V'$ is reached.
The outcome of the game, denoted by $\out_v(\strat{1}, \strat{2})$ is the set of possible paths
of $\Stochgame{G}^{\strat{1}, \strat{2}}_{v}$ starting at vertex $v$ (i.e., the possible plays when strategies $\strat{1}$ and $\strat{2}$ are used).  When the initial state $v$ is fixed,  we write
$\out(\strat{1}, \strat{2})$ instead of $\out_{v}(\strat{1}, \strat{2})$. 

A \emph{Boolean objective} for $\mathcal{G}$ is a  set $\Phi \subseteq \Omega$, we say that a play $\omega$ is \emph{winning} for Player $1$ at vertex $v$ if $\omega \in \Phi$, otherwise we say that it is winning for Player $2$ 
(i.e., we consider \emph{zero-sum} games).   A strategy $\strat{1}$ is a \emph{sure winning strategy} for Player $1$  from vertex $v$ if, for every strategy $\strat{2}$ for Player $2$, $\out_v(\strat{1}, \strat{2}) \subseteq \Phi$. $\strat{1}$ is said to be \emph{almost-sure winning} if for every strategy $\strat{2}$ for Player $2$, we have $\Prob{\strat{1}}{\strat{2}}_{\Stochgame{G},v}(\Phi)=1$.
	Sure  and almost-sure winning strategies for Player $2$ are defined in a similarly.
Reachability games are games with Boolean objectives of the style: 
$\Diamond V'$, for some set $V' \subseteq V$. A standard result is that, if a reachability game has a sure winning strategy, then
it has a pure memoryless sure winning strategy \cite{ChatterjeeH12}. 

A \emph{quantitative objective} is a measurable  function $f: \Omega \rightarrow \mathbb{R}$. Given a measurable function we define $\Expect{\strat{1}}{\strat{2}}_{\Stochgame{G},v}[f]$ as the expectation of
function $f$ under probability $\Prob{\strat{1}}{\strat{2}}_{\Stochgame{G},v}$. The goal of Player $1$ is to maximize the value $f$ of the play, whereas the goal of 
Player $2$ is to minimize it. Sometimes quantitative objective functions can be defined via \emph{rewards}. These are assigned by a \emph{reward function} $r:V \rightarrow \mathbb{R}$. A \emph{stochastic game with rewards} is a structure $(V, E , V_1, V_2, V_\Probabilistic, \delta, r)$ composed of a stochastic game and a reward function.
	The value of the game for Player $1$ under strategy $\strat{1}$ at vertex $v$, 
denoted $\val_1(\strat{1})(v)$, is defined as the infimum over all the values  
resulting from Player $2$ strategies when the game starts at $v$, i.e., $\val_1(\strat{1})(v) = \inf_{\strat{2} \in \Strategies{2}} \Expect{\strat{1}}{\strat{2}}_{\Stochgame{G},v}[f]$.
The \emph{value of the game} for Player $1$ from vertex $v$ is defined as the supremum of the values of all Player $1$ strategies, i.e., $\sup_{\strat{1} \in \Strategies{1}} \val_1(\strat{1})(v)$.
Analogously, the value of the game for a Player $2$ strategy $\strat{2}$ and the value of the game 
for Player $2$ are defined as $\val_2(\strat{2})(v) = \sup_{\strat{1} \in \Strategies{1}}  \Expect{\strat{1}}{\strat{2}}_{\Stochgame{G},v} [f]$ 
and $\inf_{\strat{2} \in \Strategies{2}} \val_2(\strat{2})(v)$, respectively. We say that a game is determined if both values are equal, that is,
$\sup_{\strat{1} \in \Strategies{1}} \val_1(\strat{1})(v) = \inf_{\strat{2} \in \Strategies{2}} \val_2(\strat{2})(v)$, for every vertex $v$.

\section{Probabilistic Masking Simulation} \label{sec:prob_mask_dist}

In this section we introduce probabilistic masking simulation which is
a probabilistic extension of the masking simulation relation
introduced in \cite{CastroDDP18b}.  We also give a symbolic version of
the stochastic game characterization of the relation, and provide  an
algorithm to solve it.

\subsection{The relation.}%
Roughly speaking, a probabilistic masking simulation is a relation
between PTSs that extends probabilistic bisimulation~\cite{Larsen91}
in order to account for fault masking.  Intuitively, one of the PTSs
acts as the nominal model (i.e., the specification), while the other
one models the implementation of the system under faults.  The
nominal model describes the ideal behavior of the system (i.e., when
no faults are considered), while the implementation describes a
fault-tolerant version of the system, where the occurrence of faults
are taken into account and a fault tolerance mechanism is expected to
act upon them.  Probabilistic masking simulation allows one to analyze
whether the implementation is able to mask the faults while
preserving the behavior of the nominal model.  More specifically, for
non-faulty transitions, the relation behaves as probabilistic
bisimulation, which is captured by means of couplings and relations
respecting these couplings (as done for instance in \cite{Larsen91}).
The novel part is given by the occurrence of faults: if the
implementation performs a fault, the nominal model matches it by doing
nothing.

For a set
of actions $\Sigma$, and a (finite) set of \emph{fault labels} $\faults$, with $\faults\cap\Sigma=\emptyset$, we define $\SigmaF
= \Sigma \cup \faults$.  Intuitively, the elements of $\faults$
indicate the occurrence of a fault in a faulty implementation.
Furthermore, when useful we consider the set $\Sigma^i = \{ e^i
\mid e \in \Sigma\}$, containing the elements of $\Sigma$ indexed with
superscript $i$.

\begin{definition} \label{def:masking_rel}
  Let $A =( S, \Sigma, {\rightarrow}, s_0 )$ and
  $A' =( S', \SigmaF, {\rightarrow'}, s_0' )$ be two PTSs representing
  the nominal and the implementation model, respectively.
  $A'$ is \emph{strong probabilistic masking fault-tolerant} with
  respect to $A$ if there exists a relation $\M \subseteq S \times S'$
  such that:
  \begin{inparaenum}[(a)]
  \item%
    $s_0 \M s'_0$, and
  \item%
    for all $s \in S, s' \in S'$ with $s \M s'$ and all $e \in \Sigma$
    and $F \in \faults$ the following holds:
  \end{inparaenum}%
  \begin{enumerate}[(1)]
  \item%
    if $s \xrightarrow{e} \mu$, then $s' \xrightarrow{e} \mu'$ and
    $\mu \MaskCoup \mu'$ for some $\mu'$;
  \item%
    if $s' \xrightarrow{e} \mu'$, then $s \xrightarrow{e} \mu$ and
    $\mu \MaskCoup \mu'$ for some $\mu$;
  \item%
    if $s' \xrightarrow{F} \mu'$, then $\Dirac_s \MaskCoup \mu'$.
  \end{enumerate}
  If such relation exists we say that $A'$ is a \emph{strong
    probabilistic masking fault-tolerant implementation} of $A$, 
    denoted $A \Masking A'$.
\end{definition}

\begin{example}\label{example:memory}
Consider a memory cell storing one bit of information that periodically refreshes its value.  The memory supports writing and reading operations, whereas a refresh performs a read operation and overwrites the value with itself.
Obviously, in this system, the result of a reading depends on the value stored in the cell. 
Thus, a property associated with the system is that the value read from the cell coincides with that of the last performed writing.
This is captured by the nominal model given at the left of
Figure~\ref{fig:exam_1_mem_cell} in  {\PRISM} 
notation~\cite{DBLP:conf/cav/KwiatkowskaNP11}.  Actions $\texttt{r}i$
and $\texttt{w}i$ (for $i=0,1$) represent the actions of reading or
writing  value $i$.  The bit stored in the memory is saved in
variable \texttt{b}.  A \texttt{tick} action indicates the passing of
one time unit and in doing so, with probability \texttt{p}, it
enables the refresh action (\texttt{rfsh}).  Variable \texttt{m}
indicates whether the system is in write/read mode, or producing a
refresh.

\begin{figure}[t]
\begin{minipage}[t]{.45\textwidth}
\fontsize{6.6}{6.6}\selectfont\ttfamily
\begin{tabbing}
x\=xxxxxxx\=xxxxxxxxxxxxx\=x\=xxx\= \kill    
module NOMINAL\\[1ex]
\>b : [0..1] init 0;\\
\>m : [0..1] init 0; \>\>// 0 = normal,\\
\>                   \>\>// 1 = refreshing\\[1ex]
\>[w0]   \>(m=0)          \>\>-> \>(b'= 0);\\
\>[w1]   \>(m=0)          \>\>-> \>(b'= 1);\\
\>[r0]   \>(m=0) \& (b=0) \>\>-> \>true;\\
\>[r1]   \>(m=0) \& (b=1) \>\>-> \>true;\\
\>[tick] \>(m=0)          \>\>-> \>p : (m'= 1) +\\
\>       \>               \>\>   \>(1-p) : true;\\
\>[rfsh] \>(m=1)          \>\>-> \>(m'= 0);\\[1ex]
endmodule\\
\end{tabbing}
\end{minipage}
\hfill
\begin{minipage}[t]{.55\textwidth}
\fontsize{6.6}{6.6}\selectfont\ttfamily
\begin{tabbing}
x\=xxxxxxxx\=xxxxxxxxxxxxx\=xxx\=xxx\= \kill    
module FAULTY\\[1ex]
\>v : [0..3] init 0;\\
\>s : [0..2] init 0; \>\>// 0 = normal, 1 = faulty,\\
\>                   \>\>// 2 = refreshing\\
\>\textcolor{red}{f : [0..1] init 0;} \>\>\textcolor{red}{// fault limiting artifact}\\[1ex]
\>[w0]    \>(s!=2)           \>\>-> \>(v'= 0) \& (s'= 0);\\
\>[w1]    \>(s!=2)           \>\>-> \>(v'= 3) \& (s'= 0);\\
\>[r0]    \>(s!=2) \& (v<=1) \>\>-> \>true;\\
\>[r1]    \>(s!=2) \& (v>=2) \>\>-> \>true;\\
\>[tick]  \>(s!=2)           \>\>-> \>p : (s'= 2) + q : (s'= 1) \\
\>        \>                 \>\>   \>+ (1-p-q) : true;\\
\>[rfsh]  \>(s=2)            \>\>-> \>(s'=0)\\
\>        \>                 \>\>   \>\& (v'= (v<=1) ? 0 : 3);\\
\>[fault] \>(s=1) \textcolor{red}{\& (f<1)}   \>\>-> \>(v'= (v<3) ? (v+1) : 2) \\
\>        \>                 \>\>   \>\& (s'= 0) \textcolor{red}{\& (f'= f+1)};\\
\>[fault] \>(s=1) \textcolor{red}{\& (f<1)}    \>\>-> \>(v'= (v>0) ? (v-1) : 1) \\
\>        \>                 \>\>   \>\& (s'= 0) \textcolor{red}{\& (f'= f+1)};\\[1ex]
endmodule\\
\end{tabbing}
\end{minipage}

\caption{Nominal and fault-tolerant models for the memory cell.} \label{fig:exam_1_mem_cell}
\end{figure}

A potential fault in this scenario occurs when a cell unexpectedly changes its value (e.g.,  as a consequence of some electromagnetic interference).  In practice, the occurrence of such an error 
has a certain probability. A typical technique to deal with this situation is \emph{redundancy}; for instance, using three memory bits instead of one. Then, writing operations are performed simultaneously 
on the three bits while reading returns the value read by majority (i.e.,  by \emph{voting}).
The right hand-side model of
Figure~\ref{fig:exam_1_mem_cell} represents this implementation with
the occurrence of the fault implicitly modeled (ignore, by the time
being the red part).  Variable \texttt{v} counts the votes for the value
1.  Thus writing 1 (\texttt{w1}) sets \texttt{v} in 3, and writing 0
(\texttt{w0}) sets it in 0.  The read actions would return 1
(\texttt{r1}) if $\texttt{v}\geq 2$ and 0 (\texttt{r0}) otherwise.
In addition to enabling the refresh action, a \texttt{tick} may also
enable the occurrence of a fault with probability \texttt{q}, with the
restriction that $\texttt{p}+\texttt{q}\leq 1$.
Variable \texttt{s} indicates that the system is in normal mode
($\texttt{s}=0$), in a state where a fault may occur ($\texttt{s}=1$),
or producing a refresh ($\texttt{s}=2$).  Notice that reading and
writing are allowed as long as the system is not producing a refresh.
The red coloured text of the figure is an artifact to limit the number
of faults to 1.  Under this condition, it is easy to check that the
relation
$\M = \{{\langle(b,m),(v,s,f)\rangle} \mid {{2b\leq v\leq 2b{+}1} \wedge (m=1 \Leftrightarrow s=2)}\}$
is a probabilistic masking simulation (where $b$, $m$, $v$, $s$, and
$f$ represent the values of variables \texttt{b}, \texttt{m},
\texttt{v}, \texttt{s}, and \texttt{f}, respectively.)

It should be evident that if the red coloured text is not present then
\texttt{FAULTY} is not a probabilistic masking fault-tolerant
implementation of \texttt{NOMINAL}. 
\end{example}
\subsection{A characterization in terms of a stochastic game.}

In the following, we define a stochastic masking simulation game for the nominal model 
$A = ( S, \Sigma, {\rightarrow}, s_0 )$ and the implementation model $A' = ( S', \Sigma_{\mathcal{F}}, {\rightarrow'}, s'_0 )$. 
The game is similar to a bisimulation game \cite{Stirling98}, and it is played by two players, 
named by convenience the Refuter ($\Refuter$) and the Verifier ($\Verifier$). The Verifier 
wants to prove that $s \in S$ and $s' \in S'$ are probabilistic masking similar, 
and the Refuter intends to disprove that.
The game starts from the pair of states $(s,s')$ and the following steps are repeated:
\begin{enumerate}
\item[1)]
  $\Refuter$ chooses either a transition $s \xrightarrow{a} \mu$ from
  the nominal model or a transition $s' \xrightarrow{a} \mu'$ from the
  implementation;
\item[2.a)]
  If $a \notin \faults$, $\Verifier$ chooses a transition matching
  action $a$ from the opposite model, i.e., a transition $s'
  \xrightarrow{a} \mu'$ if $\Refuter$'s choice was from the nominal,
  or a transition $s \xrightarrow{a} \mu$ otherwise.  In addition,
  $\Verifier$ chooses a coupling $w$ for $(\mu, \mu')$;
\item[2.b)]
  If $a \in \faults$, $\Verifier$ can only selects the Dirac
  distribution $\Dirac_{s}$ and the only possible coupling $w$ for
  $(\Dirac_{s}, \mu')$;
\item[3)]
  The successor pair of states $(t, t')$ is chosen probabilistically
  according to $w$.
\end{enumerate}
If the play continues forever, then the Verifier wins; otherwise, the
Refuter wins.  Step 2.b is the only one that differs from the usual
bisimulation game.  It is necessary for the asymmetry produced by
transitions that represent the occurrence of faults: if the Refuter
chooses to play a fault in the implementation, then it needs to be
masked, and therefore the Verifier cannot produce any move in the
nominal model.  Thus, the probabilistic step of the fault can only be
matched by a Dirac distribution on the same state of the nominal
model.

In the following we define the stochastic making game graph that
allows us to formalize this game.

\begin{definition} \label{def:strong_masking_game_graphi}
  Let $A =( S, \Sigma, {\rightarrow}, s_0 )$ and
  $A' = ( S', \Sigma_\faults , {\rightarrow'}, s'_0 )$ be two PTSs.
  The 2-players \emph{stochastic masking game graph}
  $\StochG_{A,A'} = (V^\StochG, E^\StochG, V^\StochG_\Refuter, V^\StochG_\Verifier, V^\StochG_\Probabilistic, \InitVertex, \delta^\StochG)$,
   is defined as follows:
  \begin{align*}
    V^\StochG = \
    & V^\StochG_\Refuter \cup V^\StochG_\Verifier \cup V^\StochG_\Probabilistic, \text{where: }\\
    V^\StochG_\Refuter = \
    & \{ (s, \mhyphen, s', \mhyphen, \mhyphen, \mhyphen, \Refuter) \mid
          s \in S \wedge s' \in S' \} \cup
      \{\ErrorSt\}\\
    V^\StochG_\Verifier = \
    & \{ (s, \sigma^1, s', \mu, \mhyphen, \mhyphen, \Verifier) \mid
         s \in S \wedge s' \in S'
         \wedge \sigma \in \Sigma
         \wedge (s, \sigma, \mu) \in {\rightarrow} \} \cup {} \\
    & \{ (s, \sigma^2, s', \mhyphen, \mu', \mhyphen, \Verifier) \mid
         s \in S \wedge s' \in S'
         \wedge \sigma \in \SigmaF
         \wedge (s', \sigma, \mu') \in {\rightarrow'}\}\\
    V^\StochG_\Probabilistic = \
    & \{ (s, \mhyphen, s', \mu, \mu', w, \Probabilistic) \mid
         s \in S \wedge s' \in S' \wedge 
         \mu \in \Dist(S) \wedge \mu' \in \Dist(S')
         \wedge w \in \couplings{\mu}{\mu'}\}\\
    \InitVertex = \
    & ( s_0, \mhyphen, s'_0, \mhyphen, \mhyphen, \mhyphen, \Refuter )
    \text{ \ (the Refuter starts playing)}\\
    & \hspace{-2.6em}
    \delta^\StochG : V^\StochG_\Probabilistic \rightarrow \Dist(V^\StochG_\Refuter),
      \text{ defined by } 
      \delta^\StochG((s, \mhyphen, s', \mu, \mu', w, \Probabilistic))((t, \mhyphen, t', \mhyphen, \mhyphen, \mhyphen, \Refuter)) = w(t,t')
      \text{,}
  \end{align*}
  and $E^\StochG$ is the minimal set satisfying the following rules:
  \begin{align*}
    s \xrightarrow{\sigma} \mu
    & \Rightarrow \tuple{(s, \mhyphen, s', \mhyphen, \mhyphen, \mhyphen, \Refuter), (s, \sigma^{1}, s', \mu, \mhyphen, \mhyphen, \Verifier)}\in E^\StochG
    \tag{1$_1$}\label{play:r:1}\\
    s' \xrightarrowprime{\sigma} \mu'
    & \Rightarrow \tuple{(s, \mhyphen, s', \mhyphen, \mhyphen, \mhyphen, \Refuter),(s, \sigma^{2}, s', \mhyphen, \mu', \mhyphen, \Verifier)}\in E^\StochG
    \tag{1$_2$}\label{play:r:2}\\
    {s' \xrightarrowprime{\sigma} \mu'} \wedge {w \in \couplings{\mu}{\mu'}}
    & \Rightarrow \tuple{(s, \sigma^1, s', \mu, \mhyphen, \mhyphen, \Verifier),(s, \mhyphen, s', \mu, \mu', w, \Probabilistic)}\in E^\StochG
    \tag{2.a$_1$}\label{play:v:1}\\
    {\sigma \notin \faults} \wedge {s \xrightarrow{\sigma} \mu} \wedge {w \in \couplings{\mu}{\mu'}}
    & \Rightarrow \tuple{(s, \sigma^2, s', \mhyphen, \mu', \mhyphen, \Verifier), (s, \mhyphen, s', \mu, \mu', w, \Probabilistic)}\in E^\StochG
    \tag{2.a$_2$}\label{play:v:2a}\\
    {F \in \faults} \wedge {w \in \couplings{\Dirac_s}{\mu'}}
    & \Rightarrow \tuple{(s, F^2, s', \mhyphen, \mu', \mhyphen, \Verifier), (s, \mhyphen, s', \Dirac_s, \mu', w, \Probabilistic)}\in E^\StochG
    \tag{2.b}\label{play:v:2b}\\
    (s, \mhyphen, s', \mu, \mu', w, \Probabilistic) \in V^\StochG_\Probabilistic \wedge & \\ 
    (t,t') \in \support{w} 
    & \Rightarrow \tuple{(s, \mhyphen, s', \mu, \mu', w, \Probabilistic), (t, \mhyphen, t', \mhyphen, \mhyphen, \mhyphen, \Refuter)}\in E^\StochG
    \tag{3}\label{play:p}\\
    {v\in (V^\StochG_\Verifier{\cup}\{\ErrorSt\})} \wedge ({\nexists v'} & {}\neq\ErrorSt : {\tuple{v,v'}\in E^\StochG})
    \Rightarrow  \tuple{v,\ErrorSt}\in E^\StochG
    \tag{err}\label{play:err}
  \end{align*}
\end{definition}

The definition follows the idea of the game previously described.  A
round of the game starts in the Refuter's state $\InitVertex$.  Notice
that, at this point, only the current states of the nominal and
implementation models are relevant (all other information is not yet defined in this round and hence marked with
``$\mhyphen$'').  Step 1 of the game is encoded in rules
(\ref{play:r:1}) and (\ref{play:r:2}), where the Refuter chooses a
transition, thus defining the action and distribution that need to be
matched, and moving to a Verifier's state.  Thus, a Verifier's state
in $V^\StochG_\Verifier$ is a tuple that has also defined which action
and distribution need to be matched and which model the Refuter has
moved.  Step 2.a of the game is given by rules (\ref{play:v:1}) and
(\ref{play:v:2a}) in which the Verifier chooses a matching move from
the opposite model (hence defining the other distribution) and an
appropriate coupling, and moving to a probabilistic state.  Step 2.b
of the game is encoded in rule (\ref{play:v:2b}).  Here the Verifier
has no choice since it is obliged to choose the Dirac distribution
$\Dirac_s$ and the only available coupling in
$\couplings{\Dirac_s}{\mu'}$.  A probabilistic state in
$V^\StochG_\Probabilistic$ has everything defined to probabilistically
resolve the next step through function $\delta^\StochG$ (rule
(\ref{play:p})).
Finally, if a player has no move, then it can only move to the error
state $\ErrorSt$ (rule (\ref{play:err})). This can only happen in a Verifier's state or in
$\ErrorSt$.

The notion of probabilistic masking simulation can be captured by the
corresponding stochastic masking game with the appropriate Boolean
objective.

\begin{theorem} \label{thm:wingame_strat}
  Let $A= ( S, \Sigma, {\rightarrow}, s_0 )$ and $A'=( S', \SigmaF, {\rightarrow'}, s_0' )$ be two PTSs.
  We have $A \Masking A'$ iff the Verifier has a sure winning strategy for the stochastic masking game graph 
  $\mathcal{G}_{A,A'}$ with the Boolean objective $\neg \Diamond \ErrorSt$.
\end{theorem}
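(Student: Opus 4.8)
The plan is to establish the equivalence through the correspondence between masking relations and the Verifier's sure winning region. The first observation is that, because $\neg\Diamond\ErrorSt$ is a \emph{sure} (safety) objective, the probabilistic vertices in $V^\StochG_\Probabilistic$ behave demonically: the Verifier surely avoids $\ErrorSt$ from such a vertex only if \emph{every} support-successor of the coupling is safe. Hence the sure game collapses to an ordinary two-player reachability/safety game on the underlying graph, obtained by folding $V^\StochG_\Probabilistic$ into the adversary's (Refuter's) vertices; by the standard result cited for reachability games this game is determined and admits pure memoryless winning strategies, so there is a well-defined Verifier winning region $W\subseteq V^\StochG$ with $\ErrorSt\notin W$, characterised by the usual fixpoint: a Refuter vertex lies in $W$ iff all its successors do, a Verifier vertex iff some successor does, and a probabilistic vertex iff all its support-successors do. I would also record that, by the totality assumption on the PTSs, a Refuter vertex $(s,\mhyphen,s',\mhyphen,\mhyphen,\mhyphen,\Refuter)$ always has a move, so rule (\ref{play:err}) can only fire at Verifier vertices; thus $\ErrorSt$ is reached exactly when the Verifier gets stuck.

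For the forward direction, given a masking relation $\M$ with $s_0\M s_0'$, I would build a Verifier strategy maintaining the invariant that every visited Refuter vertex $(s,\mhyphen,s',\mhyphen,\mhyphen,\mhyphen,\Refuter)$ satisfies $s\M s'$, and analyse the Refuter's move from such a vertex. If the Refuter plays $s\xrightarrow{e}\mu$ (rule (\ref{play:r:1}), $e\in\Sigma$), condition~(1) yields $s'\xrightarrow{e}\mu'$ with $\mu\MaskCoup\mu'$, and the Verifier responds via rule (\ref{play:v:1}) with this $\mu'$ and an $\M$-respecting coupling $w$. If the Refuter plays $s'\xrightarrow{e}\mu'$ with $e\in\Sigma$ (rule (\ref{play:r:2})), condition~(2) supplies a matching $\mu$ and coupling through rule (\ref{play:v:2a}). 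If the Refuter plays a fault $s'\xrightarrow{F}\mu'$, the Verifier is forced by rule (\ref{play:v:2b}) to the unique coupling for $(\Dirac_s,\mu')$, which condition~(3) guarantees respects $\M$. In every case the Verifier has a legal move, so it never gets stuck and never reaches $\ErrorSt$; and since the chosen coupling respects $\M$, every possible successor Refuter vertex $(t,\mhyphen,t',\mhyphen,\mhyphen,\mhyphen,\Refuter)$ again satisfies $t\M t'$. The invariant thus propagates along every play, so $\ErrorSt$ is never reached and the Verifier wins surely.

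For the backward direction, assuming the Verifier surely wins from $\InitVertex$, I would define $\M=\{(s,s')\mid (s,\mhyphen,s',\mhyphen,\mhyphen,\mhyphen,\Refuter)\in W\}$. Property~(a) is immediate since $\InitVertex\in W$. The three conditions are read off the fixpoint characterisation of $W$. If $s\M s'$ and $s\xrightarrow{e}\mu$ with $e\in\Sigma$, the Refuter move (rule (\ref{play:r:1})) leads to a Verifier vertex that is in $W$ (all successors of a winning Refuter vertex are winning), so that vertex has some winning successor: the Verifier can pick $s'\xrightarrow{e}\mu'$ and a coupling $w$ whose probabilistic vertex is in $W$, which forces all its support-successors to be winning Refuter vertices, i.e.\ $w$ respects $\M$, giving $\mu\MaskCoup\mu'$ and condition~(1); condition~(2) is symmetric via rules (\ref{play:r:2}) and (\ref{play:v:2a}). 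For condition~(3), a fault move $s'\xrightarrow{F}\mu'$ forces the Verifier (rule (\ref{play:v:2b})) to the unique coupling $w$ for $(\Dirac_s,\mu')$, whose support is $\{(s,t')\mid t'\in\support{\mu'}\}$; winning of the resulting probabilistic vertex says every $(s,\mhyphen,t',\mhyphen,\mhyphen,\mhyphen,\Refuter)\in W$, i.e.\ $s\M t'$ for all $t'\in\support{\mu'}$, which is exactly $\Dirac_s\MaskCoup\mu'$.

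The main obstacle I expect is getting the semantics of the probabilistic vertices right under the \emph{sure} objective, namely recognising that they must be treated demonically so that ``winning probabilistic vertex'' coincides with ``$\M$-respecting coupling'', together with the bookkeeping of the fault case, where the coupling in $\couplings{\Dirac_s}{\mu'}$ is unique and forced, so that the sure-safety condition on that single probabilistic vertex matches condition~(3) exactly. The remaining arguments, namely totality ruling out Refuter deadlocks and the routine induction preserving the relational invariant, are then straightforward.
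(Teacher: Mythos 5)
Your proposal is correct, and its overall architecture (relation $\Rightarrow$ invariant-preserving Verifier strategy; sure winning $\Rightarrow$ extract $\M$ and verify conditions (1)--(3)) matches the paper's; the forward direction is essentially identical, including the observation that only Verifier vertices can reach $\ErrorSt$ and that the fault case is forced through the unique coupling in $\couplings{\Dirac_s}{\mu'}$. Where you genuinely diverge is the backward direction. The paper defines $\M =\{(s,s') \mid (s,\mhyphen,s',\mu,\mu',w,\Probabilistic)\in\strat{\Verifier}(V^\StochG_\Verifier)$ for some sure winning strategy $\strat{\Verifier}\}$, i.e.\ it reads the relation off the \emph{image of winning strategies} at Verifier vertices, and then proves that these couplings respect $\M$ by a contradiction argument (if $w(t,t')>0$ with $\neg(t\M t')$, the Refuter would win from the reachable vertex $(t,\mhyphen,t',\mhyphen,\mhyphen,\mhyphen,\Refuter)$). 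You instead define $\M$ from the Refuter vertices in the Verifier's winning region $W$ and read conditions (1)--(3) directly off the fixpoint characterisation of $W$, after collapsing the probabilistic vertices demonically into the adversary --- a collapse the paper itself performs, but only later, in the proof of Theorem~\ref{th:strat-W}. Your route buys a cleaner and arguably more robust argument: ``winning probabilistic vertex $=$ $\M$-respecting coupling'' becomes definitional rather than established by contradiction, and the quantification over ``some sure winning strategy'' in the paper's definition of $\M$ (which makes the well-definedness of the subsequent argument slightly delicate) disappears. The price is an explicit appeal to determinacy and the fixpoint characterisation of safety/reachability winning regions on an \emph{infinite} game graph; this is standard (attractors work transfinitely) and the paper invokes the same cited result, but it is an ingredient your proof depends on that the paper's direct strategy-based argument at this point does not need to make explicit.
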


Notice that the graph for a stochastic masking game could be infinite.
Indeed, each probabilistic node of the graph includes a coupling
between the two contending distributions, and there can be uncountably
many of them.
It nonetheless induces an algorithm as
follows.  We define regions $W^i$ of the graph vertices.  Intuitively,
each $W^i$ represents a collection of vertices from which the Refuter
has a strategy (in the infinite game) with probability greater than
$0$ of reaching the error state in at most $i$ steps (these sets can
be thought of as a probabilistic version of \emph{attractors}
\cite{Jurd11}).

\begin{definition}\label{def:W}
  Let
  $\StochG_{A,A'} = (V^\StochG, E^\StochG, V^\StochG_\Refuter, V^\StochG_\Verifier, V^\StochG_\Probabilistic, \InitVertex, \delta^\StochG)$
  be a stochastic masking game graph for PTSs $A$ and $A'$.
  We define sets $W^i$ (for $i \geq 0$)  as follows:
  \begin{align*}
    W^0= {} & \{\ErrorSt\}, \\
    W^{i+1} = {}
    & \textstyle
      \{v' \mid v' \in V^\StochG_\Refuter \wedge \post(v') \cap W^i \neq \emptyset \} \cup {} \\
    & \textstyle
      \{v' \mid {v' \in V^\StochG_\Verifier} \wedge  {\vertices{\post(v')} \subseteq  \bigcup_{j \leq i}W^{j}} \wedge {\vertices{\post(v')} \cap W^i \neq \emptyset} \} \cup {} \\
    & \textstyle
      \{v' \mid v' \in V^\StochG_\Probabilistic \wedge \sum_{v'' \in \post(v') \cap W^i} \delta^\StochG(v')(v'') > 0 \}
  \end{align*}
  where
  $\vertices{V'}= \{ (s, \mhyphen, s', \mu, \mu', w, \Probabilistic) \in V' \cap V^\StochG_\Probabilistic \mid w \in \vertices{\couplings{\mu}{\mu'}}\}$.
  Finally, let $W = \bigcup_{i  \geq 0} W^i$.
\end{definition}
The sets $W^i$ can be used to solve the game $\StochG_{A,A'}$.
Notice, in particular, that we do not take into account all possible
couplings but only those that are  vertices of the polytope
$\couplings{\pr{3}{v}}{\pr{4}{v}}$. ($\pr{i}{(x_0,\dots,x_n)}$ is the $(i+1)$-th projection, i.e., \ $x_i$.)
%
This is sufficient to determine the winner of the game since every coupling in $\couplings{\pr{3}{v}}{\pr{4}{v}}$ can be expressed as a convex combination of its vertices.  Thus, if there is a positive probability of reaching the error state with some coupling, there is also a positive probability of reaching it through a vertex coupling.
By taking only the vertex couplings, only a finite number of graph
vertices are collected in each $W_i$ and hence $W$ can be effectively
computed with a fix point algorithm.

The following result is a straightforward adaptation of the results
for reachability games over finite graphs \cite{ChatterjeeH12}.
	
\begin{theorem}\label{th:strat-W} 
  Let
  $\StochG_{A,A'} = (V^\StochG, E^\StochG, V^\StochG_\Refuter, V^\StochG_\Verifier, V^\StochG_\Probabilistic, \InitVertex, \delta^\StochG)$
  be a stochastic masking game graph for PTSs $A$ and $A'$.  Then, the
  Verifier has a sure winning strategy from vertex $v$ iff
  $v \notin W$.
\end{theorem}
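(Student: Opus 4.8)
The plan is to notice first that sure winning is a purely \emph{qualitative}, support-based notion: by definition $\strat{V}$ is sure winning from $v$ iff every play in $\out_v(\strat{V},\strat{R})$ avoids $\ErrorSt$, for \emph{every} Refuter strategy $\strat{R}$. Hence the actual transition probabilities at probabilistic vertices are irrelevant beyond their support, and the game collapses to an ordinary reachability game for the Refuter (reach $\ErrorSt$) / safety game for the Verifier (avoid $\ErrorSt$) on the support graph, in which a probabilistic vertex behaves \emph{demonically} for the Verifier: a play may follow any positive-probability branch. Under this reading $W$ is exactly the Refuter's attractor of $\{\ErrorSt\}$, and the theorem becomes the classical ``$v$ is in the attractor iff the Verifier cannot stay safe.'' The only non-standard feature is that Verifier and probabilistic vertices come in uncountably many instances because of the couplings; the $\vertices{\cdot}$ restriction in Definition~\ref{def:W} reduces this to finite branching, and justifying that reduction is the crux.

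The key lemma is a reduction to vertex couplings. Fix $\mu,\mu'$ and a ``bad'' set $B\subseteq S\times S'$ of successor pairs. The couplings that \emph{avoid} $B$, namely $\{w\in\couplings{\mu}{\mu'}\mid w(t,t')=0\text{ for all }(t,t')\in B\}$, form a face of the polytope $\couplings{\mu}{\mu'}$, being the intersection of the faces $\{w\mid w(t,t')=0\}$. A nonempty face of a bounded polytope contains a vertex of the polytope, so if \emph{every} $w\in\vertices{\couplings{\mu}{\mu'}}$ puts positive mass on $B$, then so does every $w\in\couplings{\mu}{\mu'}$. Equivalently, the Verifier can reach a pair in $B$ with positive probability using some coupling iff it can already do so using a \emph{vertex} coupling; this is precisely what legitimizes $\vertices{\post(\cdot)}$ in the Verifier clause of $W$.

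For soundness ($v\in W$ implies the Verifier has no sure winning strategy) set $\mathrm{rank}(v)=\min\{i\mid v\in W^i\}$ and induct on the rank, exhibiting a single memoryless Refuter strategy that forces, against every Verifier strategy, a positive-probability play into $\ErrorSt$; this suffices, since such a play is an actual play in the outcome and thus refutes sure winning. The base case is $v=\ErrorSt$. For the step, if $v\in V^\StochG_\Refuter$ the Refuter clause yields a successor of smaller rank and the Refuter moves there; if $v\in V^\StochG_\Probabilistic$ the probabilistic clause yields a positive-probability successor of smaller rank; if $v\in V^\StochG_\Verifier$, taking $B$ to be the pairs whose Refuter vertex has rank $\le i$, the Verifier clause says all vertex couplings lie in $\bigcup_{j\le i}W^j$, so the Key Lemma forces \emph{every} coupling the Verifier might pick to reach, with positive probability, a Refuter vertex of rank $\le i$. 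In all three cases the rank strictly decreases along a positive-probability branch, and composing the finitely many positive factors along the forced path gives positive probability of $\Diamond\ErrorSt$.

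For completeness ($v\notin W$ implies the Verifier wins surely), define the Verifier strategy that at any Verifier vertex $u\notin W$ moves to a vertex-coupling successor outside $W$; such a successor exists, for otherwise all (finitely many) vertex couplings would lie in $W$ and, with $i$ their maximal rank, the Verifier clause would place $u\in W^{i+1}$. One then checks that the invariant ``current vertex $\notin W$'' is preserved: from a Refuter vertex $r\notin W$ the Refuter clause gives $\post(r)\cap W=\emptyset$; from a probabilistic vertex $n\notin W$ the probabilistic clause gives that $n$ has zero probability into every $W^i$, hence into $W$; and the Verifier's own moves stay outside $W$ by construction. Since $\ErrorSt\in W^0\subseteq W$, no play reaches $\ErrorSt$, and since Refuter and probabilistic vertices always have successors (every PTS state has an outgoing transition, and rule~(err) never applies to Refuter vertices) every play is infinite, so the strategy is sure winning for $\neg\Diamond\ErrorSt$. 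I expect the two genuine obstacles to be (i) the polytope-face argument, which is what makes the finite vertex-coupling fixpoint faithful to the infinite game, and (ii) the corner case of a Verifier vertex with \emph{no} legitimate matching move, i.e.\ $\post(u)=\{\ErrorSt\}$ and $\vertices{\post(u)}=\emptyset$: such a vertex is losing and must be read into $W$ (vacuously ``all legitimate moves lie in $W$''), the single place where the standard ``no move $=$ loss'' convention is invoked, and which both directions use in the Verifier case.
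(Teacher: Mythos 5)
Your proof is correct, and its skeleton matches the paper's: the paper likewise collapses the stochastic game to a two-player reachability game $\mathcal{H}^{\StochG}_{A,A'}$ (probabilistic vertices are handed to the Refuter and the distributions forgotten), observes that sure winning transfers, and then runs the same two inductions you run (a winning Verifier strategy avoids every $W^k$; outside $W$ the Verifier can maintain the invariant of staying outside $W$). The differences are in the details, and on two points yours are improvements. First, on legitimizing the $\vertices{\cdot}$ restriction: the paper's appendix argues that for a non-vertex coupling, ``since polytopes do not contain lines,'' there is a polytope vertex $w'$ positive exactly where the chosen coupling is positive, i.e.\ a vertex with the \emph{same} support; this is false in general, since vertices of transportation polytopes have acyclic supports of size at most $|\support{\mu}|+|\support{\mu'}|-1$ while a general coupling may have full support. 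What is true, and all that is needed, is support \emph{containment}: decomposing the coupling as a convex combination of vertices (the argument sketched in the paper's main text), every vertex occurring with positive weight avoids whatever the coupling avoids --- which is exactly your face lemma (the couplings avoiding $B$ form a face, and a nonempty face of a bounded polytope contains a vertex of the polytope). So your Key Lemma cleanly repairs a step that the paper's own appendix proof states incorrectly. Second, your corner case (ii) is a genuine issue the paper glosses over: read literally, a stuck Verifier vertex $u$ has $\post(u)=\{\ErrorSt\}$ and, since $\ErrorSt\notin V^\StochG_\Probabilistic$, $\vertices{\post(u)}=\emptyset$, so the Verifier clause of Definition~\ref{def:W} can never fire for it; and since Verifier vertices are the only possible seeds of $W$ beyond $\ErrorSt$ (Refuter successors are Verifier vertices, and successors of probabilistic vertices are pair-form Refuter vertices, never $\ErrorSt$), the literal $W$ would collapse to $\{\ErrorSt\}$ and the theorem would fail precisely when masking fails. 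Your convention of reading stuck Verifier vertices into $W$ (equivalently, letting $\vertices{\cdot}$ retain non-probabilistic successors) is exactly the repair that the paper's proof tacitly assumes. Your remaining deviation --- building the positive-probability refuting play by rank induction directly in the stochastic game rather than in the reduced game $\mathcal{H}^{\StochG}_{A,A'}$ --- is equivalent content, just packaged without the intermediate game.
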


It is worth noting that, for a probabilistic vertex
$(s, \mhyphen, s', \mu, \mu', w, \Probabilistic) \in V^\StochG_\Probabilistic$,
the two-way transportation polytope $\couplings{\mu}{\mu'}$ has at
least $\frac{\max\{m,n\}!}{(\max\{m, n\}-\min\{m,n\}+1)!}$ vertices
(and at most $m^{n-1}n^{m-1}$ vertices)~\cite{KleeWitzgall}, where
$m=\support{\mu}$ and $n=\support{\mu'}$.  Therefore, it could be
computationally impractical to calculate such sets.

\subsection{A Symbolic Game Graph.}


In this section, we introduce a finite representation of stochastic
masking games through a symbolic representation which enables a more
efficient algorithm.
We define the symbolic graph for a stochastic masking game in two
parts.  The first part captures the non-stochastic behaviour of the
game by removing the stochastic choice ($\delta^\StochG$) of the
game graph, as well as the couplings on the vertices.  The
second part appends an equation system to each probabilistic vertex
whose solution space is the polytope defined by the set of all
couplings for the contending distributions.

\begin{definition} \label{def:symbolic_game_graph}
  Let $A = ( S, \Sigma, {\rightarrow}, s_0 )$
  and $A' = ( S', \Sigma_\faults, {\rightarrow'}, s'_0 )$
  be two PTSs.
  The \emph{symbolic game graph} for the stochastic masking game
  $\mathcal{G}_{A,A'}$ is defined by the structure
  $\SymbG_{A,A'} = ( V^{\SymbG}, E^{\SymbG}, V^{\SymbG}_\Refuter, V^{\SymbG}_\Verifier, V^{\SymbG}_\Probabilistic, v_0^{\SymbG} )$,
  where:
  \begin{align*}
    V^\SymbG = \
    & V^\SymbG_\Refuter \cup V^\SymbG_\Verifier \cup V^\SymbG_\Probabilistic, \text{ where: }\\
    V^\SymbG_\Refuter = \
    & \{ (s, \mhyphen, s', \mhyphen, \mhyphen, \Refuter) \mid
          s \in S \wedge s' \in S' \} \cup
      \{\ErrorSt\}\\
    V^\SymbG_\Verifier = \
    & \{ (s, \sigma, s', \mu, \mhyphen, \Verifier) \mid
         s \in S \wedge s' \in S'
         \wedge \sigma \in \Sigma^1
         \wedge (s, \sigma, \mu) \in {\rightarrow} \} \cup {} \\
    & \{ (s, \sigma, s', \mhyphen, \mu', \Verifier) \mid
         s \in S \wedge s' \in S'
         \wedge \sigma \in \SigmaF^2
          \wedge (s', \sigma, \mu') \in {\rightarrow} \} \\
    V^\SymbG_\Probabilistic = \
    & \{ (s, \mhyphen, s', \mu, \mu', \Probabilistic) \mid
         s \in S \wedge s' \in S' \wedge 
         \mu \in \Dist(S) \wedge \mu' \in \Dist(S')\}\\
    v_0^{\SymbG} = \
    & ( s_0, \mhyphen, s'_0, \mhyphen, \mhyphen, \Refuter ),
  \end{align*}
  and $E^\SymbG$ is the minimal set satisfying the following rules:
  \begin{align*}
    s \xrightarrow{\sigma} \mu
    & \Rightarrow \tuple{(s, \mhyphen, s', \mhyphen, \mhyphen, \Refuter), (s, \sigma^{1}, s', \mu, \mhyphen, \Verifier)}\in E^\SymbG \\
    s' \xrightarrowprime{\sigma} \mu'
    & \Rightarrow \tuple{(s, \mhyphen, s', \mhyphen, \mhyphen, \Refuter),(s, \sigma^{2}, s', \mhyphen, \mu', \Verifier)}\in E^\SymbG \\
    {s' \xrightarrowprime{\sigma} \mu'}
    & \Rightarrow \tuple{(s, \sigma^1, s', \mu, \mhyphen, \Verifier),(s, \mhyphen, s', \mu, \mu', \Probabilistic)}\in E^\SymbG \\
    {\sigma \notin \faults} \wedge {s \xrightarrow{\sigma} \mu}
    & \Rightarrow \tuple{(s, \sigma^2, s', \mhyphen, \mu', \Verifier), (s, \mhyphen, s', \mu, \mu', \Probabilistic)}\in E^\SymbG\\
    {F \in \faults}
    & \Rightarrow \tuple{(s, F^2, s', \mhyphen, \mu', \Verifier), (s, \mhyphen, s', \Dirac_s, \mu', \Probabilistic)}\in E^\SymbG \\
     (s, \mhyphen, s', \mu, \mu', \Probabilistic) \in V^\SymbG_\Probabilistic \wedge & \\ 
   ((t,t') \in \support{\mu}\times \support{\mu'})
    & \Rightarrow \tuple{(s, \mhyphen, s', \mu, \mu', \Probabilistic), (t, \mhyphen, t', \mhyphen, \mhyphen,  \Refuter)}\in E^\SymbG \\
    {v\in (V^\SymbG_\Verifier{\cup}\{\ErrorSt\})} & {} \wedge {(\nexists {v'\neq\ErrorSt} : {\tuple{v,v'}\in E^\SymbG})}
    \Rightarrow  \tuple{v,\ErrorSt}\in E^\SymbG
  \end{align*}
  In addition, for each
  $v=(s, \mhyphen, s', \mu, \mu', \Probabilistic) \in V^\SymbG_\Probabilistic$,
  define the set of variables
  $X(v)=\{x_{s_i,s_j} \mid s_i \in \text{Supp}(\mu) \wedge s_j \in \text{Supp}(\mu')\}$,
  and the system of equations
  \begin{align*}
    \Eq(v) = {}
    & \textstyle
    \big\{ \sum_{s_j \in \support{\mu'}} x_{s_k,s_j}=\mu(s_k) \mid s_k \in \support{\mu} \big\} \cup {} \\
    & \textstyle
    \big\{ \sum_{s_k \in \support{\mu}} x_{s_k,s_j}=\mu'(s_j) \mid s_j \in \support{\mu'} \big\} \cup {} \\
    & \textstyle
    \big\{ x_{s_k,s_j} \geq 0 \mid s_k \in \support{\mu} \wedge s_j \in \support{\mu'} \big\}
  \end{align*}
\end{definition}

Notice that $\{\bar{x}_{s_k,s_j}\}_{s_k,s_j}$ is a solution of
$\Eq(v)$ if and only if there is a coupling
$w\in\couplings{\mu}{\mu'}$ such that $w(s_k,s_j)=\bar{x}_{s_k,s_j}$
for all $s_k \in \support{\mu}$ and $s_j \in \support{\mu'}$.

In addition, given a set of game vertices
$V' \subseteq V^\SymbG_\Refuter$,
we define $\Eq(v, V')$ by extending $\Eq(v)$ with an equation limiting  the couplings in such a way that vertices in $V'$ are \emph{not} reached.
Formally,
$\Eq(v, V') = \Eq(v) \cup \big\{\sum_{(s, \mhyphen, s', \mhyphen, \mhyphen, \Refuter) \in V'} x_{s,s'} = 0\big\}$.
%
By properly defining a family of sets $V'$, we will show that
probabilistic masking simulation can be checked in polynomial time
through the symbolic game graph.

In the following we propose to use the symbolic game graph to solve
the infinite game. By doing so, we obtain a polynomial time procedure.
Similarly to Definition~\ref{def:W}, we provide an inductive
construction of the Refuter-winning nodes using equation systems in
place of sets of polytope vertices, as follows.

\begin{definition}\label{def:U}
  Let
  $\SymbG_{A,A'} = ( V^{\SymbG},  E^{\SymbG}, V^{\SymbG}_\Refuter, V^{\SymbG}_\Verifier, V^{\SymbG}_\Probabilistic, v_0^{\SymbG} )$
  be a symbolic game graph for PTSs $A$ and $A'$.
  Sets $U^i$ (for $i \geq 0$) are defined as follows:
  \begin{align*}
    U^0 = {} & \{\ErrorSt\},  \label{def:Uji}\\
    U^{i+1} = {}
    & \textstyle 
    \{v' \mid v' \in V^\SymbG_\Refuter \wedge \post(v') \cap U^{i} \neq \emptyset \} \cup {}\\
    & \textstyle 
    \{v' \mid v' \in V^\SymbG_\Verifier \wedge \post(v') \subseteq \bigcup_{j\leq i}U^{j} \wedge \post(v') \cap U^i \neq \emptyset  \}  \cup {}\\
    & \textstyle 
    \{v' \mid v' \in V^\SymbG_\Probabilistic \wedge \post(v') \cap U^{i} \neq \emptyset \wedge \Eq(v', \post(v') \cap U^{i}) \text{ has no solution}\}
  \end{align*}
  Furthermore, we define $U = \bigcup_{i \geq 0} U^i$.
\end{definition}

The construction of each $U^{i+1}$ follows a similar idea
as the construction of $W^{i+1}$, only varying significantly in the
case of the probabilistic vertices.  The first and second line
correspond to the Refuter and Verifier players, respectively.  The last
one corresponds to the probabilistic player.
Notice that, if $\Eq(v', \post(v') \cap U^{i})$ has no solution, it
means that every possible coupling will inevitably lead with some
probability to a ``losing'' state of a smaller level, since, in
particular, equation
$\sum_{(s, \mhyphen, s', \mhyphen, \mhyphen, \Refuter) \in (\post(v') {\cap} U^{i})} x_{s,s'} = 0$
cannot be satisfied.

There is a strong connection between sets $W^i$ and $U^i$: a vertex is
in $W^i$ if and only if its abstract version is in $U^i$.  This is
formally stated in the next theorem.
	
\begin{theorem}\label{th:U-and-W}
  Let $\mathcal{G}_{A,A'}$ be a stochastic masking game graph
  for PTSs $A$ and $A'$ and let $\SymbG_{A,A'}$
  be the corresponding symbolic game graph.
  For every $v \in V^\StochG$, $u \in V^\SymbG$ such that
  $\pr{i}{v} = \pr{i}{u}$ (for $0 \leq i \leq 4$) and
  $\pr{6}{v} = \pr{5}{u}$, and for all $k\geq 0$,
  $v \in U^k$ if and only if $u \in W^k$.
\end{theorem}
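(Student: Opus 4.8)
The plan is to prove that the two families coincide level by level, by induction on $k$, for every pair $(v,u)$ related by the stated projections (which amounts to forgetting the coupling slot of a probabilistic vertex and aligning the player tag). I would strengthen the induction hypothesis so that it holds for \emph{all} levels $j\le k$ simultaneously, since the Verifier and probabilistic clauses of Definitions~\ref{def:W} and~\ref{def:U} refer both to level-$i$ membership and to the cumulative unions $\bigcup_{j\le i}W^j$ and $\bigcup_{j\le i}U^j$. The base case is immediate, as $W^0=U^0=\{\ErrorSt\}$ and $\ErrorSt$ corresponds to itself. The induction step is a case analysis on the type of $v$ (equivalently $u$).

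The Refuter case is routine: the outgoing edges of a concrete Refuter vertex and of its symbolic counterpart (rules (1$_1$),(1$_2$)) differ only in the omitted coupling component, so $\post(v')$ and $\post(u')$ are in a projection-preserving bijection with the Verifier successors; by the induction hypothesis $\post(v')\cap W^i\neq\emptyset$ iff $\post(u')\cap U^i\neq\emptyset$, which is exactly the Refuter clause. The Verifier case is where the two constructions genuinely diverge: the concrete clause ranges over $\vertices{\post(v')}$, i.e.\ only the probabilistic successors carrying \emph{vertex} couplings of the transportation polytope, whereas the symbolic clause ranges over the coupling-free probabilistic successors $\post(u')$. Matching these reduces to the probabilistic case, together with the degenerate no-move subcase in which the Verifier can only reach $\ErrorSt$ (handled directly, the error state sitting at level $0$ on both sides).

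The probabilistic case is the main obstacle. For a symbolic probabilistic vertex $u'=(s,\mhyphen,s',\mu,\mu',\Probabilistic)$, membership in $U^{i+1}$ is decided by infeasibility of $\Eq(u',\post(u')\cap U^i)$. The engine is the correspondence between solutions of this system and couplings noted after Definition~\ref{def:symbolic_game_graph}, combined with the convexity remark following Definition~\ref{def:W}: a solution of $\Eq(u',\post(u')\cap U^i)$ is a coupling placing zero mass on $\post(u')\cap U^i$, so the system is infeasible iff \emph{every} coupling — equivalently, since every coupling is a convex combination of vertex couplings, every \emph{vertex} coupling — assigns positive mass to some Refuter successor in $\post(u')\cap U^i$. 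Transporting these Refuter successors across the two graphs by the induction hypothesis, this says precisely that each vertex-coupling concrete probabilistic vertex $v'_w\in\vertices{\post(v')}$ routes positive probability into $W^i$, which is what the concrete Verifier clause records. The delicate point, deserving the most care, is exactly this passage between the LP-feasibility description on the symbolic side and the extreme-point / positive-probability description on the concrete side, plus verifying that the level index is preserved — namely that avoiding all of $\bigcup_{j<i}U^j$ while meeting $U^i$ lands the vertex at precisely level $i+1$ on both sides.

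Assembling the three cases yields $v\in W^{k+1}$ iff $u\in U^{k+1}$ for every related pair, completing the induction; taking unions then gives the equivalence for $W=\bigcup_iW^i$ and $U=\bigcup_iU^i$, which is what licenses solving the infinite stochastic masking game through its finite symbolic representation.
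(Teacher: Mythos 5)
Your plan reproduces the paper's proof essentially step for step: induction on $k$ (with the cumulative unions carried along), a case split on the vertex type, the Refuter case via the projection-preserving bijection on successors, and the probabilistic case driven by the correspondence between solutions of $\Eq(v', \post(v') \cap U^i)$ and couplings placing zero mass on the offending Refuter successors. The only difference is presentational: you make the convexity reduction to vertex couplings and the level bookkeeping explicit (the paper invokes the convexity remark only in the main text after Definition~\ref{def:W}, works with the concrete vertex's own coupling in the appendix, and dispatches the converse direction with ``the other direction is similar''), so your sketch is, if anything, slightly more careful at exactly the points the paper glosses.
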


The following theorem is a direct consequence of Theorems
\ref{th:strat-W} and \ref{th:U-and-W}.

\begin{theorem}
  Let $\StochG_{A,A'}$ be a stochastic game graph for PTSs $A$
  and $A'$, and let $\SymbG_{A,A'}$ be the corresponding symbolic game
  graph.  Then, the Verifier has a sure winning strategy in
  $\StochG_{A,A'}$ if and only if $v^\SymbG_0 \notin U$.
\end{theorem}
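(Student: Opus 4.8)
The plan is to obtain the statement as an immediate corollary of Theorems~\ref{th:strat-W} and~\ref{th:U-and-W}, chaining the two equivalences they provide. By convention, ``the Verifier has a sure winning strategy in $\StochG_{A,A'}$'' means that it has one from the initial vertex $\InitVertex$ of the stochastic masking game graph. Hence the first step is simply to invoke Theorem~\ref{th:strat-W} at $v=\InitVertex$, which yields that the Verifier has a sure winning strategy iff $\InitVertex \notin W$. It then remains to transfer this condition from the (infinite) stochastic graph to the (finite) symbolic one, i.e.\ to show that $\InitVertex \notin W$ iff $v_0^\SymbG \notin U$.

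For this transfer I would use Theorem~\ref{th:U-and-W}, which establishes the level-wise correspondence between $W^k$ and $U^k$ for every pair of a stochastic vertex $v$ and a symbolic vertex $u$ that agree on their first five coordinates ($\pr{i}{v}=\pr{i}{u}$ for $0\le i\le 4$) and on their player label ($\pr{6}{v}=\pr{5}{u}$). The only thing to check is that the two initial vertices form such a pair. Indeed, $\InitVertex=(s_0,\mhyphen,s_0',\mhyphen,\mhyphen,\mhyphen,\Refuter)$ and $v_0^\SymbG=(s_0,\mhyphen,s_0',\mhyphen,\mhyphen,\Refuter)$, so they share the state components $s_0,s_0'$, carry ``$\mhyphen$'' in the remaining matched coordinates, and both are Refuter vertices; thus the hypotheses of Theorem~\ref{th:U-and-W} hold for $v=\InitVertex$ and $u=v_0^\SymbG$.

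Applying Theorem~\ref{th:U-and-W} to this pair gives $\InitVertex \in W^k$ iff $v_0^\SymbG \in U^k$ for every $k \ge 0$. Taking the union over $k$ and using the definitions $W=\bigcup_{i\ge 0}W^i$ and $U=\bigcup_{i\ge 0}U^i$ yields $\InitVertex \in W$ iff $v_0^\SymbG \in U$, equivalently $\InitVertex \notin W$ iff $v_0^\SymbG \notin U$. Combining this with the equivalence supplied by Theorem~\ref{th:strat-W} closes the chain: the Verifier has a sure winning strategy in $\StochG_{A,A'}$ iff $\InitVertex \notin W$ iff $v_0^\SymbG \notin U$.

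There is essentially no hard step here, since the two supporting theorems do all the work; the only points requiring (trivial) care are recognising that the two initial vertices satisfy the coordinate-matching hypothesis of Theorem~\ref{th:U-and-W}, and observing that the per-level equivalence lifts to the unions $W$ and $U$ because membership in either set is witnessed at some finite level.
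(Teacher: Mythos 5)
Your proposal is correct and follows exactly the route the paper intends: the paper itself presents this theorem as a direct consequence of Theorems~\ref{th:strat-W} and~\ref{th:U-and-W}, and you supply precisely the (routine) missing details --- instantiating Theorem~\ref{th:strat-W} at $\InitVertex$, checking that $\InitVertex$ and $v_0^\SymbG$ satisfy the coordinate-matching hypothesis, and lifting the level-wise equivalence to the unions $W$ and $U$. You also implicitly read the correspondence in Theorem~\ref{th:U-and-W} the right way around (stochastic vertices in $W^k$, symbolic vertices in $U^k$), despite the slightly garbled phrasing in the paper's statement, so nothing further is needed.
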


As a consequence of this last theorem and
Theorem~\ref{thm:wingame_strat}, it suffices to calculate the set $U$
over $\SymbG_{A,A'}$ to decide whether there is a probabilistic
masking simulation between $A$ and $A'$.
This can be done in polynomial time, since $\Eq(v,C)$ can be solved in
polynomial time and the number of iterations to construct $U$ is
bounded by $|V^\SymbG|$.  Since $V^\SymbG$ linearly depends on the
transitions of the involved PTSs, we have the following theorem.

\begin{theorem}
  Let $A$ and $A'$ be PTSs.  $A \Masking A'$ can be decided in time
  $O(\textit{Poly}(m\cdot m'))$, where $m$ and $m'$ are 
  the size of the transitions of $A$ and $A'$, respectively.
\end{theorem}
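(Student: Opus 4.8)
The plan is to reduce the decision problem to computing the set $U$ over the symbolic game graph $\SymbG_{A,A'}$ and then to bound the cost of that computation; the actual content is a complexity analysis, since correctness is already settled. By the preceding theorem together with Theorem~\ref{thm:wingame_strat}, we have $A \Masking A'$ exactly when $v_0^{\SymbG} \notin U$, so the whole procedure amounts to constructing $U$ and testing membership of the initial vertex. I would split the analysis into three parts: bounding the size of the relevant part of $\SymbG_{A,A'}$, bounding the number of fixpoint iterations needed to reach $U$, and bounding the cost of a single iteration.

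First I would bound the size of the symbolic game graph restricted to the subgraph reachable from $v_0^{\SymbG}$, which is all that matters for computing $U$. The Refuter vertices are pairs of states plus $\ErrorSt$, so there are $|S|\cdot|S'|+1$ of them. The Verifier vertices each carry a transition of one of the two models, so their number is bounded by $|S'|\cdot m + |S|\cdot m'$, where $m,m'$ are the transition sizes of $A,A'$. Although $V^\SymbG_\Probabilistic$ is defined over \emph{all} pairs of distributions and is a priori enormous, the edge rules only ever reach a vertex $(s, \mhyphen, s', \mu, \mu', \Probabilistic)$ in which $\mu,\mu'$ come from actual matching transitions, or $\mu=\Dirac_s$ in the fault case; hence the number of reachable probabilistic vertices is at most the number of pairs of matching transitions, i.e.\ $O(m\cdot m')$. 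Using the standing assumption that every state has an outgoing transition (so $|S|\le m$ and $|S'|\le m'$), the entire reachable $V^\SymbG$ has size polynomial in $m\cdot m'$.

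Next I would argue that the sequence $(U^i)_i$ stabilises after at most $|V^\SymbG|$ steps. The key point is that $(U^i)_i$ is monotone increasing, which I would prove by induction on $i$. For Refuter and Verifier vertices, monotonicity is immediate from $U^{i-1}\subseteq U^i$. For a probabilistic vertex $v'$ it follows because enlarging the forbidden set from $\post(v')\cap U^{i-1}$ to $\post(v')\cap U^i$ only adds constraints to the forbidding equation of $\Eq(v',\cdot)$, and adding constraints preserves infeasibility; so if $\Eq(v',\post(v')\cap U^{i-1})$ has no solution, neither does $\Eq(v',\post(v')\cap U^i)$. Since each non-stationary iteration adds at least one new vertex and there are only polynomially many, $U=\bigcup_i U^i$ is reached within $|V^\SymbG|$ iterations.

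The remaining, and only genuinely non-combinatorial, step is to bound the work per iteration, which is where the real content lies. For Refuter and Verifier vertices the update is a direct inspection of successors, clearly polynomial. For a probabilistic vertex $v'$ the update requires deciding whether $\Eq(v',\post(v')\cap U^{i})$ has a solution: this is a linear feasibility problem whose variables $x_{s_i,s_j}$ are indexed by $\support{\mu}\times\support{\mu'}$ and whose constraints are the two families of marginal equalities, the nonnegativity inequalities, and the single reachability-forbidding equation, all polynomially many in the support sizes and hence in $m\cdot m'$. Linear feasibility over the rationals is solvable in polynomial time, so each probabilistic update is polynomial. Multiplying the per-iteration cost by the polynomial number of vertices and iterations yields an overall bound of $O(\textit{Poly}(m\cdot m'))$, and the final test $v_0^{\SymbG}\in U$ is immediate. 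I expect the main obstacle to be the probabilistic case---specifically, making precise that only polynomially many probabilistic vertices are reachable and that the associated feasibility checks, rather than the fixpoint bookkeeping, are the dominant cost.
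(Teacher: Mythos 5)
Your proposal is correct and takes essentially the same route as the paper: the paper likewise reduces the decision to computing the fixpoint $U$ over $\SymbG_{A,A'}$ (via the two preceding theorems), notes that each $\Eq(v,C)$ check is a linear feasibility problem solvable in polynomial time, bounds the number of iterations by $|V^\SymbG|$, and observes that $|V^\SymbG|$ depends polynomially on the transition sizes $m$ and $m'$. Your extra details --- restricting to the reachable probabilistic vertices (since the raw definition of $V^\SymbG_\Probabilistic$ ranges over all distribution pairs) and proving monotonicity of $(U^i)_i$ so the fixpoint stabilises --- merely make explicit what the paper leaves implicit.
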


\section{Quantifying Fault Tolerance in Almost-Sure Failing Systems} \label{sec:prob_almost_sure}

Probabilistic masking simulation determines whether a fault-tolerant
implementation is able to completely mask faults.  However, in
practice, this kind of masking fault-tolerance is uncommon. Usually,
fault-tolerant systems are able to mask a number of faults before
suffering a failure. Our main goal in this section is to extend the
game theory presented in the previous section to be able to measure
the amount of masking tolerance exhibited by a system before it fails.
To do this, we extend the probabilistic masking games with a
quantitative objective function.  This is done in such a way that the
expected value of this function indicates the number of ``milestones''
that the fault-tolerant implementation is expected to cross before
failing.  A milestone is any interesting event that may occur during
the execution of the system.  For instance, a milestone may be the
number of faults occurring during the execution of the system, and
therefore this measure will reflect the number of faults that are
tolerated by the system before crashing. Another milestone may be
successful acknowledgments in a transmission protocol, in this case we
measure the expected number of chunks that the protocol is able to
transfer under the occurrence of faults before failing. 

To do this type of measuring,  we need  stochastic games with 
quantitative objective functions.  Intuitively, these objective functions count the
number of ``milestones'' observed during a play.
Therefore, we first extend stochastic masking games as follows.

\begin{definition}%
  Let
  $A =( S, \Sigma, \rightarrow, s_0 )$ and $A' =( S', \SigmaF, \rightarrow', s_0' )$
  be two PTSs.
  A \emph{stochastic masking game graph with milestones} $\mathcal{M}$
  is a tuple
  $\MilestoneG_{A,A'} = (V^\StochG, E^\StochG, V^\StochG_\Refuter, V^\StochG_\Verifier, V^\StochG_\Probabilistic, \InitVertex, \delta^\StochG,  \mathcal{M})$
  where:
  \begin{inparaenum}[(i)]
  \item%
    $(V^\StochG, E^\StochG, V^\StochG_\Refuter, V^\StochG_\Verifier, V^\StochG_\Probabilistic, \InitVertex, \delta^\StochG)$  is a stochastic masking game graph,
  \item%
    $\mathcal{M} \subseteq \SigmaF^2$ is the set of \emph{milestones}, and
  \item%
    $\reward^\StochG(v) = \chi_{\mathcal{M}}(\pr{1}{v})$ is a \emph{reward function}.
  \end{inparaenum}
\end{definition}
In this definition, $\chi_B$ is the characteristic function over
set $B$ defined as usual:
$\chi_B(a) = 1$ whenever $(a{\in}B)$ and $\chi_B(a) = 0$ otherwise.
If $B=\{b\}$ is a singleton set, we simply write $\chi_b$.

Given a stochastic masking game graph with milestones and reward function $r^\StochG$, for any play 
$\rho = \rho_0, \rho_1,  \dots$,
we define the \emph{masking payoff function} by
$\FMask(\rho) = \lim_{n \rightarrow \infty} (\sum^{n}_{i=0} \reward^\StochG(\rho_i))$.

	Intuitively, the payoff function $\FMask$ characterizes the number of milestones that a fault-tolerant implementation is able to achieve until the
error state is reached. This type of payoff functions are usually called \emph{total rewards} in the literature.
    One may think of this as a game played by the fault-tolerance built-in mechanism and a (malicious) player that chooses the 
way in which faults occur. In this game, the Verifier is the maximizer (she intends to obtain as much milestones as possible) and the 
Refuter is the minimizer (she intends to prevent the Verifier from achieving milestones).


For the game of expected total reward to be determined, we need that
the stochastic game is almost-sure stopping, i.e., that the game
reaches a sink vertex with probability 1~\cite{FilarV96}. We manage to
extend the determinacy property to games that are almost-sure stopping
under the condition that the minimizing player is
fair~\cite{DBLP:journals/corr/abs-2112-09811}.

In our setting, this amounts to considering \emph{almost-sure failing
masking games}, that is, games in which the error state $\ErrorSt$ is
reached with probability 1.  Moreover, we require that the Refuter
plays fair.  This is necessary to avoid the Refuter stalls the game
in an unproductive loop.
Indeed, consider the scenario described in
Example~\ref{example:memory} and set the stochastic masking game
between the nominal and faulty model of
Figure~\ref{fig:exam_1_mem_cell} (omitting the red part).
One would expect that the game leads to a failure with probability 1.
However, the Refuter has strategies for which the probability of
reaching $\ErrorSt$ is less than $1$.  For instance, the Refuter may
always play the reading action, and hence the Verifier has to mimic
this action forever, this yields a probability of $0$ of reaching the
error state.
Observe that, in this scenario, the Refuter is behaving in a
benevolent manner, playing with the aim to avoid the error state.
Clearly, this is against the spirit of the behaviour of faults which
one expects they happen if waiting long enough.
Therefore the assumption that the Refuter is fair in the sense that,
if some action or fault is infinitely often enabled for the Refuter,
it will eventually play such action or fault.

The setting for the stochastic game with the masking payoff function
as objective that we present in the rest of the section stands
on~\cite{DBLP:journals/corr/abs-2112-09811}, only that here special
care is needed due to the infinite nature of the stochastic game
graphs.
For this reason we limit the results of the rest of the section to
(randomized) memoryless strategies and postpone the general result for
further work.
Thus, we let $\Strategies{\Verifier}^{\Memoryless}$ and
$\Strategies{\Refuter}^{\Memoryless}$ denote the sets of all
(randomized) memoryless strategies for the Verifier and the Refuter,
respectively, and similarly, we let
$\Strategies{\Verifier}^{\Memoryless\Deterministic}$ and
$\Strategies{\Refuter}^{\Memoryless\Deterministic}$ denote the sets of
all pure (or deterministic) memoryless strategies.

A Refuter's fair play is defined as a play in which the Refuter commits to follow a strong fair pattern, i.e., that includes infinitely often any transition that is enabled infinitely often.  A fair strategy for the Refuter, is a strategy that always measures 1 on the set of all the Refuter's fair plays, regardless of the strategy of the Verifier.  The definition provided below follows the style in~\cite{DBLP:journals/dc/BaierK98,BaierK08,DBLP:journals/corr/abs-2112-09811}.

\begin{definition}
  Given a masking game
  $\StochG_{A,A'} = (V^\StochG, E^\StochG, V^\StochG_\Refuter, V^\StochG_\Verifier, V^\StochG_\Probabilistic, \InitVertex, \delta^\StochG)$,
  the \emph{set of all Refuter's fair plays} is defined by
  $ 
	\RFP = \{ \omega \in \Omega \mid v \in \inf(\omega) \cap V^\StochG_\Refuter \Rightarrow \post(v) \subseteq \inf(\omega) \}
  $.
  A Refuter strategy $\strat{\Refuter}$ is said to
  be \emph{almost-sure fair} iff, for every Verifier's strategy
  $\strat{\Verifier}$,
  $\Prob{\strat{\Refuter}}{\strat{\Verifier}}_{\StochG_{A,A'},\InitVertex}(\RFP) = 1$.
\end{definition}


Under this concept, the stochastic masking game is almost-sure failing
under fairness if for every Verifier's strategy and every Refuter's
fair strategy, the game leads to an error with probability 1.  This is
formally defined as follows.

\begin{definition}
  Let $A$ and $A'$ be two PTSs.  We say that the stochastich masking
  game $\StochG_{A,A'}$ is \emph{almost-sure failing under fairness
  (and memoryless strategies)} iff, for every memoryless strategy
  $\strat{\Verifier} \in \Strategies{\Verifier}^{\Memoryless}$ and any
  fair memoryless strategy
  $\strat{\Refuter} \in \Strategies{\Refuter}^{\Memoryless}$,
  $\Prob{\strat{\Verifier}}{\strat{\Refuter}}_{\StochG_{A,A'}, \InitVertex}(\Diamond \ErrorSt)=1$.
\end{definition}



    Interestingly,  under strong fairness assumptions the determinacy of games is preserved \cite{DBLP:journals/corr/abs-2112-09811}. Furthermore,  in finite stochastic games with fairness restrictions the value of the game can be computed by calculating the greatest fixed point of the following Bellman functional. We can adapt this result to our games by using the vertices of the polytopes when computing the   values of the probabilistic states. 
\begin{theorem}\label{theo:det-fairness} Let $\MilestoneG_{A,A'}$ be a stochastic game with milestones for some PTSs $A$ and $A'$ that is almost-sure 
failing for fair Refuter's strategies.  Then,  we have:
\[
 \inf_{\strat{\Refuter} \in \Strategies{\Refuter}^{\Memoryless}} \sup_{\strat{\Verifier} \in \Strategies{\Verifier}^{\Memoryless}}  \mathbb{E}^{\strat{\Verifier},\strat{\Refuter}}_{\MilestoneG_{A,A'}, v}[\FMask] 
    = \sup_{\strat{\Verifier \in \Strategies{\Verifier}^{\Memoryless}}} \inf_{\strat{\Refuter \in \Strategies{\Refuter}^{\Memoryless}}}    \mathbb{E}^{\strat{\Verifier},\strat{\Refuter}}_{\MilestoneG_{A,A'}, v}[\FMask] 
    < \infty.
\]
Moreover, 
the value of the game for memoryless strategies for the Verifier and  fair memoryless Refuter's strategies is the greatest fixpoint of the following functional $\Bellman$: 
\[
    \Bellman(f)(v) =
    \begin{cases}
           \displaystyle \min \{\upperbound,  \max_{w \in \vertices{\couplings{\pr{3}{v}}{\pr{4}{v}}}} \{ \chi_{\mathcal{M}}(\pr{1}{v})  + \sum_{v' \in \post(v)} w(\pr{0}{v'},\pr{2}{v'})  f(v') \} \}& \text{ if }v \in V^{\SymbG}_\Probabilistic  \\
           \displaystyle \min \{ \upperbound, \max \{\chi_{\mathcal{M}}(\pr{1}{v})  + f(v') \mid v' \in \post(v) \} \} & \text{ if } v \in  V^{\SymbG}_\Verifier, \\
           \displaystyle \min \{ \upperbound,  \min \{\chi_{\mathcal{M}}(\pr{1}{v})  + f(v') \mid v' \in \post(v) \} \} & \text{ if } v \in  V^{\SymbG}_\Refuter, \\
           \displaystyle 0 & \text{ if } v=\ErrorSt.
           \displaystyle 
    \end{cases}
\]
where $\upperbound$ is a number such that 
$\upperbound \geq \inf_{\strat{\Refuter} \in \Strategies{\Refuter}^{\Memoryless\Deterministic}} \sup_{\strat{\Verifier} \in \Strategies{\Verifier}^{\Memoryless\Deterministic}} \Expect{\strat{\Verifier}}{\strat{\Refuter}}_{\MilestoneG_{A,A'}, \InitVertex}[\FMask]$.
\end{theorem}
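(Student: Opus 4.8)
The plan is to reduce the infinite stochastic game $\MilestoneG_{A,A'}$ to a finite stochastic game and then invoke the determinacy result for finite games under fairness from~\cite{DBLP:journals/corr/abs-2112-09811}. The only source of infinity is the uncountable set of couplings available to the Verifier; the key observation is that, since the Verifier is the maximizer and the reward contributed by a single step is fixed once the continuation value is fixed, her coupling choice may always be restricted to a vertex of the polytope $\couplings{\mu}{\mu'}$ without loss. First I would make this precise: for any bounded continuation value $f$ on the probabilistic successors, the quantity $\sum_{v' \in \post(v)} w(\pr{0}{v'},\pr{2}{v'})\, f(v')$ is \emph{linear} in $w$, so its maximum over the compact, convex polytope $\couplings{\mu}{\mu'}$ is attained at some $w \in \vertices{\couplings{\mu}{\mu'}}$. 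Every coupling is a convex combination of vertex couplings and thus yields a payoff that is the corresponding convex combination of the vertex payoffs, hence no better than the best vertex coupling. This is exactly the reorganization encoded by the $\max$ over $\vertices{\couplings{\pr{3}{v}}{\pr{4}{v}}}$ appearing in $\Bellman$, where the Verifier's coupling choice has been folded into the probabilistic vertex.

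With this observation I would define the finite game $\hat{\MilestoneG}_{A,A'}$ obtained from $\MilestoneG_{A,A'}$ by allowing only vertex couplings (equivalently, the symbolic graph of Definition~\ref{def:symbolic_game_graph} equipped with $\reward^\StochG$ and the vertex-coupling maximization at probabilistic vertices), and verify two facts. First, the value is preserved: holding any Refuter strategy fixed turns the Verifier's optimization into a Markov reward process in which, by the linearity argument, an optimal response never benefits from a non-vertex coupling; hence for each $\strat{\Refuter}$ the inner $\sup_{\strat{\Verifier}}$ is unchanged, and combining this with the fact that $\hat{\MilestoneG}_{A,A'}$ offers the Verifier a subset of her original strategies, a squeezing argument shows that both $\sup_{\strat{\Verifier}}\inf_{\strat{\Refuter}}$ and $\inf_{\strat{\Refuter}}\sup_{\strat{\Verifier}}$ coincide across the two games. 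Second, the almost-sure failing property transfers: since the universally quantified condition ``for every Verifier strategy and every fair Refuter strategy $\Diamond\ErrorSt$ has probability $1$'' held for all Verifier strategies, it continues to hold after restricting to the subset available in $\hat{\MilestoneG}_{A,A'}$. Thus $\hat{\MilestoneG}_{A,A'}$ is a finite stochastic game that is almost-sure failing under fairness.

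Finally I would apply the determinacy theorem of~\cite{DBLP:journals/corr/abs-2112-09811} to $\hat{\MilestoneG}_{A,A'}$. Under fairness and almost-sure failing it yields determinacy together with finiteness of the common value: since $\ErrorSt$ is reached with probability $1$ and carries reward $0$, and each step contributes at most one milestone, the expected total reward is bounded, which justifies the existence of the finite constant $\upperbound$. The same result identifies the value as the greatest fixpoint of the Bellman functional, which on the finite game is precisely $\Bellman$ as displayed: the truncation $\min\{\upperbound,\cdot\}$ confines the iteration to the complete lattice of functions bounded by $\upperbound$, on which $\Bellman$ is monotone, so Knaster--Tarski guarantees a greatest fixpoint, and fairness is what forces the \emph{greatest} rather than the least fixpoint by preventing the Refuter from stalling in a reward-free loop. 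I expect the main obstacle to be the value-preservation step for \emph{memoryless} strategies, where one must ensure that the per-visit maximization over vertex couplings folded into the probabilistic vertices is realized by a single stationary coupling choice at the corresponding Verifier vertex, and that this interchange remains compatible with the fairness constraint on the Refuter.
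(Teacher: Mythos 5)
Your proposal is correct and follows essentially the same route as the paper's own proof: restrict the Verifier's coupling choices to the vertices of the transportation polytopes via the linearity/convexity argument, pass to the resulting finite subgame, transfer the value (in both sup--inf orders) and the almost-sure-failing-under-fairness property, and then invoke the determinacy and greatest-fixpoint result of~\cite{DBLP:journals/corr/abs-2112-09811} with the truncation by $\upperbound$ providing the complete lattice on which the monotone functional $\Bellman$ has a greatest fixpoint. The only steps the paper makes explicit that you leave implicit are the preliminary reduction from randomized to deterministic memoryless strategies (by the same convex-combination/argmax reasoning you already use) and the formal back-and-forth mapping between symbolic and concrete vertices showing that the two fixpoints coincide, both routine instances of observations already present in your sketch.
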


   Also, we can check whether  a game is almost-sure failing under fairness by computing predecessor sets in the symbolic game graph.  To do so, we define the symbolic version of the predecessor sets.
    Given a game $\StochG_{A,A'}$ and its symbolic version $\SymbG_{A,A'}$, let $\SymbEFairpre(C)$ and 
$\SymbAFairpre(C)$, for a given set $C$ of symbolic vertices, be defined as follows:
\begin{align*}
	\SymbEFairpre(C) ={}& \{ v \in V^\SymbG_\Probabilistic \mid  \exists v' \in C\cap V^\SymbG_\Refuter : \pr{0}{v'}\in\support{\pr{3}{v}} \land \pr{2}{v'}\in\support{\pr{4}{v}} \} \\
		       & \cup \{ v \in  V^\SymbG_\Verifier \cup V^\SymbG_\Refuter  \mid \exists v' \in C : (v,v') \in E^\SymbG_{A,A'} \}\\
	\SymbAFairpre(C) = {}&\{ v \in V^\SymbG_\Probabilistic \mid \Eq(v,C) \textit{  has no solution }\} \\
		      & \cup \{ v \in  V^\SymbG_\Verifier   \mid \forall v' \in C : (v,v') \in E^\SymbG \Rightarrow v' \in C \}\\
		      &  \cup \{ v \in  V^\SymbG_\Refuter  \mid \exists v' \in C : (v,v') \in E^\SymbG \}
\end{align*}
%
In particular, the first set in the definition of $\SymbEFairpre(C)$
collects all probabilistic vertices $v$ for which there is a coupling
that leads to a Refuter vertex $v'$ in $C$.  For this is sufficient to
check that the states $\pr{0}{v'}$ and $\pr{2}{v'}$ that define $v'$
are on the respective support sets of the probabilities
$\pr{3}{v}$ and $\pr{4}{v}$ that define $v$ (since
it is always possible to define a coupling that assigns positive
probability to a pair of states in the respective support sets).  The
first set in the definition of $\SymbAFairpre(C)$ collects all
probabilistic vertices $v$ for which there is no coupling ``avoiding'' $C$, that is, no coupling that leads with probability 0 to the set of all
pair of states defining a vertex in $C$.  A coupling avoiding
$C$ will solve $\Eq(v,C)$.
By using $\SymbEFairpre$ and $\SymbAFairpre$ recursively, we can
decide whether a game is almost-sure failing under fairness as follows.

\begin{theorem}\label{theo:decide-stopping} Given a masking game $\StochG_{A,A'}$ and its symbolic version $\SymbG_{A,A'}$,  we have that 
 $\StochG_{A,A'}$ is almost-sure failing under fairness  iff
$
  \InitVertex \in V^\SymbG \setminus {\SymbEFairpre}^*(V^\SymbG \setminus {\SymbAFairpre}^*(\ErrorSt)),
$
where $\InitVertex$ is the initial state of $\SymbG_{A,A'}$ and $V^\SymbG$ its sets of vertices.
\end{theorem}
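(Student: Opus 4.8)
The plan is to prove the statement by characterizing the \emph{complement} of the almost-sure failing condition. Write $\mathit{ASF}$ for the set of symbolic vertices $v$ such that, for every memoryless Verifier strategy $\strat{\Verifier}$ and every memoryless fair Refuter strategy $\strat{\Refuter}$, $\Prob{\strat{\Verifier}}{\strat{\Refuter}}_{\StochG_{A,A'},v}(\Diamond\ErrorSt)=1$; the theorem is the instance $\InitVertex\in\mathit{ASF}$. Setting $S = V^\SymbG\setminus{\SymbAFairpre}^*(\ErrorSt)$, I would show that the set of vertices from which error is \emph{avoidable with positive probability} (i.e.\ the complement $\overline{\mathit{ASF}}$) is exactly ${\SymbEFairpre}^*(S)$, so that taking complements yields $\mathit{ASF}=V^\SymbG\setminus{\SymbEFairpre}^*(V^\SymbG\setminus{\SymbAFairpre}^*(\ErrorSt))$. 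Throughout I use the coupling bridge recorded after Definition~\ref{def:symbolic_game_graph} and the remark following the definitions of $\SymbEFairpre$/$\SymbAFairpre$: $\Eq(v,C)$ has no solution iff \emph{every} coupling in $\couplings{\pr{3}{v}}{\pr{4}{v}}$ reaches $C$ with positive probability, and dually a coupling giving a chosen Refuter successor positive mass exists iff the support condition in $\SymbEFairpre$ holds. Since $\SymbG_{A,A'}$ is finite for finite PTSs, both iterated operators reach their least fixpoints in finitely many steps.

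First I would show that $S$ is a \emph{closed safe region}. Applying De~Morgan to the fixpoint identity ${\SymbAFairpre}^*(\ErrorSt)=\{\ErrorSt\}\cup\SymbAFairpre({\SymbAFairpre}^*(\ErrorSt))$, a vertex $v\in S$ satisfies: if $v\in V^\SymbG_\Verifier$ it has a successor in $S$; if $v\in V^\SymbG_\Probabilistic$ then $\Eq(v,{\SymbAFairpre}^*(\ErrorSt))$ has a solution, so there is a coupling whose support lies entirely in $S$; and if $v\in V^\SymbG_\Refuter$ \emph{all} its successors lie in $S$. Hence a memoryless Verifier that at every $S$-vertex picks the $S$-successor or the $S$-coupling, played against any Refuter, keeps the play inside $S$ forever, and since $\ErrorSt\notin S$ this avoids $\ErrorSt$ with probability $1$. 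Choosing the Refuter to assign positive probability to every enabled transition makes it almost-sure fair (the fixed pair induces a finite Markov chain), so every $v\in S$ lies in $\overline{\mathit{ASF}}$.

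Next come the two inclusions. For ${\SymbEFairpre}^*(S)\subseteq\overline{\mathit{ASF}}$, I extend the previous strategy to any $v$ that can reach $S$ along an $\SymbEFairpre$-path: the same full-support fair memoryless Refuter together with a memoryless Verifier that plays ``towards $S$'' off $S$ and ``stay in $S$'' on $S$ realizes that finite path with positive probability (each edge is enabled, and at probabilistic vertices the support condition supplies a coupling giving the chosen successor positive mass), after which the play stays in $S$ and never reaches $\ErrorSt$; composing the two positive probabilities gives avoidance with positive probability. For the converse I consider $D=V^\SymbG\setminus{\SymbEFairpre}^*(S)$ and observe, by De~Morgan on the fixpoint of $\SymbEFairpre$, that $D$ is \emph{successor-closed} (no Verifier/Refuter successor and, since every support pair is reachable by some coupling, no coupling can leave $D$) and that $D\subseteq{\SymbAFairpre}^*(\ErrorSt)$. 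This closure is exactly what removes the Refuter vertices possessing an ``escape'' successor that prevented ${\SymbAFairpre}^*(\ErrorSt)$ alone from being the answer.

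Finally I would run a ranking argument to prove $D\subseteq\mathit{ASF}$, assigning each $v\in D$ the rank $\min\{i\mid v\in\SymbAFairpre^i(\ErrorSt)\}$ (finite by the inclusion above). Fix any memoryless $\strat{\Verifier}$ and any fair memoryless $\strat{\Refuter}$ and suppose for contradiction that $\Prob{\strat{\Verifier}}{\strat{\Refuter}}_{\StochG_{A,A'},v}(\neg\Diamond\ErrorSt)>0$ for some $v\in D$. On that event the play remains in $D$ by closure, so the vertices visited infinitely often form a subset of $D\setminus\{\ErrorSt\}$; pick one, $v^*$, of minimal rank $i\geq 1$. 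If $v^*\in V^\SymbG_\Refuter$ it has a successor of rank $<i$ which, by strong fairness and infinitely many revisits, is itself visited infinitely often; if $v^*\in V^\SymbG_\Verifier$ \emph{all} its successors have rank $<i$; if $v^*\in V^\SymbG_\Probabilistic$ then $\Eq$ has no solution, so every coupling moves to rank $<i$ with positive probability and infinitely many visits force such a move almost surely by a Borel--Cantelli argument. In each case a vertex of strictly smaller rank is visited infinitely often, contradicting minimality. Hence $\Prob{\strat{\Verifier}}{\strat{\Refuter}}_{\StochG_{A,A'},v}(\neg\Diamond\ErrorSt)=0$, i.e.\ $D\subseteq\mathit{ASF}$, and combining with the previous inclusion gives $\mathit{ASF}=V^\SymbG\setminus{\SymbEFairpre}^*(V^\SymbG\setminus{\SymbAFairpre}^*(\ErrorSt))$; evaluating at $\InitVertex$ is the claim. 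I expect this last ranking argument to be the main obstacle, since it must reconcile strong fairness (which only acts on infinitely-often-visited vertices, hence requires the confinement to the closed set $D$ that the outer $\SymbEFairpre^*$ layer provides) with the merely almost-sure progress available at probabilistic vertices.
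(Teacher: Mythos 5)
Your proposal is correct in substance, but it takes a genuinely different route from the paper's proof. The paper never runs a fairness/ranking argument itself: it (i) replaces the infinite concrete game by the finite subgame $\mathcal{H}_{A,A'}$ whose probabilistic vertices carry only couplings in $\vertices{\couplings{\mu}{\mu'}}$, justified by observing that the minimizing reachability value over couplings is attained at a polytope vertex, so an optimal memoryless Verifier already lives in $\mathcal{H}_{A,A'}$; (ii) invokes Theorem~3 of \cite{DBLP:journals/corr/abs-2112-09811} to obtain the characterization $v \in V \setminus \EFairpre^*(V \setminus \AFairpre^*(\ErrorSt))$ for that finite game; and (iii) proves, by induction on $n$, a correspondence lemma matching $\AFairpre^n$ and $\EFairpre^n$ on $\mathcal{H}_{A,A'}$ with $\SymbAFairpre^n$ and $\SymbEFairpre^n$ on the symbolic graph. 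You instead re-derive the fair almost-sure reachability characterization from first principles directly on the (uncountable) concrete game, with the symbolic operators doing the bookkeeping: the closed safe region $S$ together with a reach-$S$-with-positive-probability construction for one inclusion, and successor-closure of $D$ plus the rank/fairness/Borel--Cantelli argument for the other --- your ranking argument is, in essence, a proof of the cited Theorem~3 transplanted to this setting. What each approach buys: the paper outsources exactly the part you re-prove and confines its new work to the symbolic--concrete transfer, at the price of the detour through polytope-vertex couplings; your argument is self-contained and never needs $\vertices{\couplings{\mu}{\mu'}}$ at all, since $\Eq(v,C)$ already quantifies over \emph{all} couplings, but it must then handle the infinite game honestly. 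You correctly flag the last step as the crux; to close it you should make explicit three points you only gesture at: the process induced by randomized memoryless strategies has countably many states (discrete strategy distributions have countable support), the infinitely-often set of a play is nonempty because the finitely many Refuter vertices are visited every third step, and the ``almost-sure progress'' at probabilistic and Verifier vertices requires a conditional second Borel--Cantelli argument (note that the coupling of a fixed concrete probabilistic vertex has finite support, so some lower-rank successor indeed recurs). None of these is an obstacle, so your plan goes through; it simply proves, rather than cites, the finite-game fairness theorem on which the paper leans.
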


As $\Eq(v,C)$ can be computed in polynomial time, so do the predecessor
sets $\SymbEFairpre(C)$ and $\SymbAFairpre(C)$.  As a consequence, the
problem of deciding whether a stochastic masking game is almost-sure
failing under fairness is also polynomial.

\section{Experimental Evaluation} \label{sec:experimental_eval}
We implemented the approach described in this paper in a prototype tool, available at \cite{PMaskD}.
Tables~\ref{table:resultsMEM}, \ref{table:resultsNMR} and \ref{table:resultsNMR2} report the results obtained  for three case studies:
 a Redundant Cell Memory (our running example); N-Modular Redundancy (NMR), a standard example of fault-tolerance \cite{ShoomanBook}; and a NMR processor/memory architecture with N voters \cite{KrishnaBook}.
In the tables, $M_{t}$ and $M_{r}$ are used to indicate the measurement results for the tick and refresh actions, considered as milestones, respectively.

Some words are useful to interpret the results. For the memory cell example,  either increasing the redundancy, or augmenting the frequency of refreshing, have positive effects in the measures.  In practice, these values can be taken into account when designing a fault-tolerant component that provides an optimal balance between fault-tolerance and hardware costs. For example, assuming a fault probability of $0.05$,  one might prefer $3$ bits and more frequent refreshing, over $5$ bits with a less often refreshing, despite the software overhead.
NMR consists of N modules that independently perform a task, and whose results are processed by a perfect voter to produce a single output.
These modules may exhibit an unexpected behavior with a given probability, in which case they output an incorrect value. The results for this case study are similar to the memory cell example when there is $0$ probability of refreshing. 
The last case study consists of N processors that output a value to a memory module through N voters.  Both the voters and processors may output an incorrect value with  certain probability.
The experiment outputs the same results if the fault probability of voters and processors are exchanged. This suggests that, provided that the probability that a pair processor-voter fails remains the same, the system is more tolerant when faults occur equally distributed on voters and processors.

We have run our experiments on a MacBook Air with processor 1.3 GHz Intel Core i5 and 
a memory of 4 Gb. The tool and case studies are available in the tool repository~\cite{PMaskD}.
\begin{table}[t]
  \centering
  \parbox{.45\linewidth}{
  \scalebox{0.7}{
    \begin{tabular}{c c c c c}
      Bits & Fault Prob. & Refresh Prob. & $M_{t}$ & $M_{r}$  \\ \midrule
      \multirow{9}{*}{3} 
                  & \multirow{3}{*}{$0.5$} 
                  & $0.5$ & $6$ & $3$\\ 
                  & & $0.1$ & $4.44$ & $0.44$\\ 
                  & & $0.05$ & $4.2$  & $0.21$\\ \cline {2-5}
                  & \multirow{3}{*}{$0.1$} 
                  & $0.5$ & $70$ & $35$\\ 
                  & & $0.1$ & $30$ & $3$\\ 
                  & & $0.05$ & $25$  & $1.25$\\ \cline {2-5}
                  & \multirow{3}{*}{$0.05$} 
                  & $0.5$ & $240.02$ & $120.01$\\ 
                  & & $0.1$ & $80$ & $8$\\ 
                  & & $0.05$ & $60$  & $3$\\ \cline {1-5}
      \multirow{9}{*}{5} 
                  & \multirow{3}{*}{$0.5$} 
                  & $0.5$ & $14$ & $7$\\ 
                  & & $0.1$ & $7.28$ & $0.72$\\ 
                  & & $0.05$ & $6.62$ & $0.33$\\ \cline {2-5}
                  & \multirow{3}{*}{$0.1$} 
                  & $0.5$ & $430$ & $215.02$\\ 
                  & & $0.1$ & $70$ & $7$\\ 
                  & & $0.05$ & $47.5$ & $2.38$\\ \cline {2-5}
                  & \multirow{3}{*}{$0.05$} 
                  & $0.5$ & $2660.1$ & $1330.06$\\ 
                  & & $0.1$ & $260.01$ & $26.01$\\ 
                  & & $0.05$ & $140$ & $7$\\ \cline {1-5}
      \multirow{9}{*}{7} 
                  & \multirow{3}{*}{$0.5$} 
                  & $0.5$ & $30$ & $15$\\ 
                  & & $0.1$ & $10.74$ & $1.07$\\ 
                  & & $0.05$ & $9.28$ & $0.46$\\ \cline {2-5}
                  & \multirow{3}{*}{$0.1$} 
                  & $0.5$ & $2590.1$ & $1295.05$\\ 
                  & & $0.1$ & $150$ & $15$\\ 
                  & & $0.05$ & $81.25$ & $4.06$\\ \cline {2-5}
                  & \multirow{3}{*}{$0.05$} 
                  & $0.5$ & $29281$ & $14640.5$\\ 
                  & & $0.1$ & $800.03$ & $80.03$\\ 
                  & & $0.05$ & $300$ & $15$\\ \bottomrule
    \end{tabular}
    }
    \vspace{0.2cm}
    \caption{Experimental results on a Redundant Memory Cell.}
    \label{table:resultsMEM}
    }
    \hspace{0.5cm}
    \centering
    \parbox{.45\linewidth}{\centering
    \scalebox{0.7}{
    \begin{tabular}{c c c}
      Modules & Fault Prob. & $M_{t}$ \\
      \midrule
      \multirow{3}{*}{$3$}  & $0.5$ & $4$ \\ 
      & $0.1$ & $20$ \\ 
      & $0.05$ & $40$ \\ \cline{1-3}
      \multirow{3}{*}{$5$} & $0.5$ & $6$ \\ 
      & $0.1$ & $30$ \\ 
      & $0.05$ & $60$ \\ \cline{1-3}
      \multirow{3}{*}{$7$} & $0.5$ & $8$ \\
      & $0.1$ & $40$ \\ 
      & $0.05$ & $80$ \\ \bottomrule
      \hline
    \end{tabular}
    }
    \vspace{0.2cm}
    \caption{Experimental results on a N-Modular Redundant System.}
    \label{table:resultsNMR}
    \scalebox{0.7}{
    \begin{tabular}{c c c c}
      N & P.Fault Prob. & V.Fault Prob. & $M_{t}$ \\ 
      \midrule
      \multirow{3}{*}{$3$}  & $0.09$  & $0.01$ & $21.8$ \\ 
      & $0.07$ & $0.03$ & $24.2$ \\ 
      & $0.05$ & $0.05$ & $25$ \\ \cline{1-4}
      \multirow{3}{*}{$5$} & $0.09$ & $0.01$ & $33.18$ \\ 
      & $0.07$ & $0.03$ & $38.94$ \\ 
      & $0.05$ & $0.05$ & $41.25$ \\ \cline{1-4}
      \multirow{3}{*}{$7$} & $0.09$ & $0.01$ & $44.39$ \\
      & $0.07$ & $0.03$ & $53.78$ \\ 
      & $0.05$ & $0.05$ & $58.12$ \\ \cline{1-4}
      \multirow{3}{*}{$9$} & $0.09$ & $0.01$ & $55.53$ \\
      & $0.07$ & $0.03$ & $68.58$ \\ 
      & $0.05$ & $0.05$ & $75.39$ \\ \bottomrule
      \hline
    \end{tabular}
    }
    \vspace{0.2cm}
    \caption{Experimental results on a NMR processor/memory architecture with N voters.}
    \label{table:resultsNMR2}
    }
    \vspace{-.5cm}
\end{table}

\section{Related Work} \label{sec:related_work}
	The games introduced in \cite{Bacci0LM17,BacciBLMTB19,DesharnaisGJP04,DesharnaisLT11}  are based on probabilistic bisimulation, so they are symmetric. Furthermore, in \cite{DesharnaisGJP04,DesharnaisLT11} the nodes of the game graph are modeled using subsets of states of the PTSs, in our formulation
we do not use subsets of states. The games defined in \cite{Bacci0LM17,BacciBLMTB19} use Kanterovich's and Hausdorff's liftings to deal with probabilistic distributions and non-determinism, respectively. In addition, the authors use the vertices of the transportation polytopes to model the probabilistic vertices. In contrast, we introduced a symbolic representation of games to avoid the state explosion caused by the vertices of the polytopes. Also note that the metrics introduced in \cite{Bacci0LM17,BacciBLMTB19} measure the (probabilistic) bisimulation distance between two PTSs, which for almost-sure failing systems is always $1$.
		
Another related framework is  defined in \cite{LanotteMT17}. Therein, the authors introduce a notion of weak simulation quasimetric tailored for reasoning about the evolution of \emph{gossip protocols}. This makes it possible to compare network protocols that have similar behaviour up to a certain tolerance; being $0$ and $1$ the minimum and maximum distance, respectively.  Note that using this quasimetric to compare a network protocol with an almost-sure failing implementation will always return $1$, thus that approach cannot be used to quantify the masking fault-tolerance of almost-sure failing systems.

After the case studies of Section~\ref{sec:experimental_eval},
\emph{Mean-Time To Failure} (MTTF)~\cite{ReliabilityBook} may come to
mind.  Though this metric (lifted to games) may be the result of a
particular case study, we present a much more general framework.
Indeed, on the one hand, we do not necessarily have to count time
units, and other events may be set as milestones.  On the other hand,
the computation of MTTF would normally require the identification of
failures states in an ad hoc manner, while we do this at a higher
level of abstraction: the failure situation appears in the game as a
result of comparing the implementation model against the nominal
model.


\section{Conclusions and Future Work} \label{sec:conclusions}

We presented a relation of masking fault-tolerance between
probabilistic transition systems, which is accompanied by a
corresponding probabilistic game characterization. Even though the
game could be infinite, we proposed an alternative finite symbolic
representation by means of which the game can be solved in polynomial
time.
We extended the game with quantitative objectives based on counting
``milestones'' thus providing a way to quantify the amount of masking
fault tolerance provided by a given implementation.
As this game inherits the characteristic of total reward objectives,
some stopping criterion is necessary and thus the game is required to
be almost-sure failing under a fair Refuter.
By restricting to (randomize) memoryless strategies, we could show
that the resulting game is determined and can be computed by solving a
collection of functional equations.  We also provided a polynomial
technique to decide whether a game is almost-sure failing under
fairness.

There are many directions for future work. As an immediate one, we
have pending to extend the result on quantitative objectives to
non-memoryless strategies.  Given the result
in~\cite{DBLP:journals/corr/abs-2112-09811}, we believe this is
possible but special care is needed to deal with the infinite nature
of the game.
In a different direction, in this paper we introduced a strong version
of probabilistic masking simulation.  However, for analyzing
non-trivial systems, a weak version of this kind of relation is needed
since it could abstract away from internal transitions which are
mostly associated with the fault-tolerant machinery of the
implementation.
%
Besides, we have so far only worked with masking fault-tolerance.
Similar ideas to those presented in this paper could be extrapolated
to other levels of fault-tolerance like fail-safe and non-masking.
Finally, we have presented a prototype tool for measuring some
well-known small case studies.  Our goal is to develop an automated tool
to support the use of these measurements in practice.

\newpage
\bibliography{references}

\newpage
\appendix \label{sec:appendix}

\section{Proofs of Properties}

\noindent
\textbf{Proof of Theorem \ref{thm:wingame_strat}.}
  Let $A= ( S, \Sigma, \rightarrow, s_0 )$ and $A'=( S', \SigmaF, \rightarrow', s_0' )$ be two PTSs.
  We have $A \Masking A'$ iff the Verifier has a sure winning strategy for the stochastic masking game graph 
  $\mathcal{G}_{A,A'}$ with the Boolean objective 
$\Phi = \{ \omega_0,\omega_1, \dots  \in \Omega \mid \forall {i \geq 0} : \omega_i \neq \ErrorSt \}$. \\
\noindent
\begin{proof} 
\noindent ``Only If'': Assume $A \Masking A'$, thus there is a probabilistic masking simulation  $\M \subseteq S \times S'$.
	Let us define a sure winning strategy $\strat{\Verifier}$ for the Verifier as follows.
Given a state $(s, \sigma^1, s', \mu, \mhyphen, \mhyphen, \Verifier)$ (resp. $(s, \sigma^2, s', \mhyphen, \mu', \mhyphen, \Verifier)$), if $s \M  s'$, $\strat{\Verifier}$ selects a transition 
$((s, \sigma^1, s', \mu, \mhyphen, \mhyphen, \Verifier), (s, \sigma^1, s', \mu, \mu', w, \Probabilistic))$ (resp. $((s, \sigma^2, s', \mhyphen, \mu', \mhyphen, \Verifier),(s, \sigma^2, s', \mu, \mu', w, \Probabilistic))$) such that $w$ is a $\M$-respecting coupling for ($\mu,\mu'$) 
(which is guaranteed to exist by Def.~\ref{def:masking_rel}). Otherwise, $\strat{\Verifier}$ selects an arbitrary vertex. 
Let us show that this strategy is sure winning for the Verifier in the initial state. 
We have to prove that, for any Refuter's strategy $\strat{\Refuter}$, we have $\out(\strat{\Verifier}, \strat{\Refuter}) \subseteq\Omega \setminus \Phi$, where $\out(\strat{\Verifier}, \strat{\Refuter})$ denotes the set of paths generated when strategies $\strat{\Refuter}$ and $\strat{\Verifier}$ are used.  Let $\strat{\Refuter}$ be any strategy for the Refuter, and $\omega = \omega_0, \omega_1,  \dots$ the corresponding play in $\text{out}(\strat{\Verifier}, \strat{\Refuter})$. 
We prove by induction that $\forall i \geq 0:  \omega_i \neq \ErrorSt \wedge (\pr{6}{\omega_i}\neq \Refuter \Rightarrow \pr{0}{\omega_i} \M  \pr{2}{\omega_i})$. For $i=0$, the proof is straightforward. Assume that the property holds for $\omega_i$, if $\omega_i$ is a Verifier's vertex and 
$\pr{1}{\omega} = \sigma^1$ (resp. $\sigma^2$) with $\sigma \notin \faults$, then by definition of $\strat{\Verifier}$ and Def.~\ref{def:masking_rel} 
$\omega_{i+1} =  (s, \sigma^1, s', \mu, \mu', w,  \Probabilistic)$
(resp. $\omega_{i+1} =  (s, \sigma^2, s', \mu, \mu', w, \Probabilistic)$). Thus, we have by inductive hypothesis that $s \M s'$ and also that 
$\omega_{i+1} \neq \ErrorSt$. If $\pr{1}{\omega_i} \in \faults$,
then the proof is similar, but taking into account that  $\mu = \Dirac_s$.
If $\omega_i$ is a Refuter's vertex, then $\omega_{i+1}$ is a Verifier's vertex, and it cannot be $\ErrorSt$ because by construction only Verifier's nodes are adjacent to the $\ErrorSt$.
If $\omega_i$ is a probabilistic vertex, then note that $\support{\pr{5}{\omega_i}} \neq \emptyset$ and therefore $\ErrorSt \neq \omega_{i+1}$. 
Thereby, $ \pr{0}{\omega_{i+1}}  \M \pr{2}{\omega_{i+1}}$. 
That is, $\forall i \geq 0 : \omega_i \neq \ErrorSt$. Hence, $\omega \in \Omega \setminus \Phi$.

``If'': Suppose that the Verifier has a sure winning strategy $\strat{\Verifier}$
from the initial state. Then, we define a probabilistic masking simulation relation 
as follows: $\M =\{(s,s') \mid (s, \mhyphen, s', \mu, \mu', w, \Probabilistic) \in \strat{\Verifier}(V^{\StochG}_\Verifier)$ for some sure winning strategy $\strat{\Verifier} \}$. 
We know by our assumption that this set is not empty and it is direct to
see that $(s_0,s'_0) \in \M$. 
First, let us prove that for any $(s, \mhyphen, s', \mu, \mu', w, \Probabilistic) \in \strat{\Verifier}(V^{\StochG}_\Verifier)$ we have  $\mu \MaskCoup \mu'$. 
Assume $(s, \mhyphen, s', \mu, \mu', w, \Probabilistic) \in \strat{\Verifier}(V^{\StochG}_\Verifier)$ and it is not the case that $\mu \MaskCoup \mu'$, or equivalently 
$\exists t,t' : w(t,t') > 0 \wedge \neg (t \M t')$. 
Thus, we have a successor $(t, \mhyphen,t', \mhyphen,\mhyphen,\mhyphen, \Refuter)$ of
$(s, \mhyphen, s', \mu, \mu', w, \Probabilistic)$ which can be chosen with probability greater than $0$ and  $(t,t') \notin \M$. 
Furthermore, there exists a $t \xrightarrow{\sigma} t_0$ 
(or $t' \xrightarrowprime{\sigma} t'_0$) such that $(t_0, \sigma^1, t', \mu, \mhyphen, \mhyphen, \Verifier) \in \post((t, \mhyphen,t', \mhyphen,\mhyphen,\mhyphen, \Refuter))$ 
(resp. $(t, \sigma^2, t'_0, \mhyphen,  \mu', \mhyphen, \Verifier) \in \post((t, \mhyphen,t', \mhyphen,\mhyphen,\mhyphen, \Refuter))$). 
This state cannot be $\ErrorSt$ and there must be a winning strategy for the Verifier from it; otherwise, the Refuter would have a winning strategy from $(t, \mhyphen,t', \mhyphen,\mhyphen,\mhyphen, \Refuter)$, and
$(s, \mhyphen, s', \mu, \mu', w, \Probabilistic) \notin \image{\pi}$ for some winning strategy $\strat{\Verifier}$ for the Verifier. 
But then $(t,t') \notin \M$, contradicting the assumption above. Thus, $\mu \MaskCoup \mu'$.

Let us now prove that $\M$  is a probabilistic masking simulation. Assume that $s \M s'$ and $s \xrightarrow{a} \mu$, for any successor 
$(s, a, s', \mu, \mhyphen,  \mhyphen, \Verifier)$ of $(s, \mhyphen, s', \mhyphen, \mhyphen,  \mhyphen, \Refuter)$ we have a  sure winning strategy $\strat{\Verifier}$ 
for $\Verifier$, such that: $\pi((s, a, s', \mu, \mhyphen,  \mhyphen, \Verifier)) = (s, a, s', \mu, \mu',  w, \Probabilistic)$. 
That is, there is a $s' \xrightarrow{a} \mu'$ and also by the property proven above, we have $\mu \MaskCoup \mu'$. Similarly for the cases: $s' \xrightarrow{a} \mu'$ and $s' \xrightarrow{F} \mu'$ for $F \in \faults$. Finally, since 
$\Verifier$ has a winning strategy from $(s_0, \mhyphen, s'_0, \mhyphen, \mhyphen,  \mhyphen, \Refuter)$ we have that $s_0 \M s'_0$. 
Thus, all the requirements of Def.~\ref{def:masking_rel} holds, and so $\M$ is a probabilistic masking relation. 
\end{proof} \\

\noindent
\textbf{Proof of Theorem \ref{th:strat-W}.}
Let $\StochG_{A,A'} = (V^\StochG, E^\StochG, V^\StochG_\Refuter, V^\StochG_\Verifier, V^\StochG_\Probabilistic, \InitVertex, \delta^\StochG)$ 
be a stochastic masking game graph for some PTSs $A$ and $A'$, we have that the Verifier has a sure winning strategy from vertex 
$v$ iff $v \notin W$. \\
\noindent
\begin{proof} First, we can define a $2$-player reachability game obtained from  $\StochG_{A,A'}$ by considering the probabilistic nodes as Refuter's nodes, and ignoring the probabilistic distribution, let $\mathcal{H}^\StochG_{A,A'}$ be that game. 
It is clear that a Verifier's strategy is sure winning in $\StochG_{A,A'}$ iff this strategy is winning in $\mathcal{H}^\StochG_{A,A'}$. 
Then, proving the theorem reduces to show that the sets $W^i$ determines the winning strategies of the Verifier in 
$\mathcal{H}^\StochG_{A, A'}$ (recall that only the vertices of the polytopes are taken into account for defining the $W$'s). 
If the Verifier has a winning strategy from vertex $v$ let us prove that $v \notin W^k$ for every $k$ by induction. 
For $k=1$ it is direct. Now, assume that the property holds for $W^k$, let $v$ be an arbitrary vertex such that the Verifier 
has a winning strategy named $\strat{\Verifier}$ from $v$. 
If $v$ is a Verifier's node and $v \in W^{k+1}$, then $\vertices{\post(v))}  \subseteq W^k$. 
Thus, by inductive hypothesis $\strat{\Verifier}(v) \notin \vertices{\post(v)}$, that is, $\strat{\Verifier}(v)$ is a probabilistic vertex whose
coupling is not a vertex. Furthermore, it is a Refuter's node in $\mathcal{H}^\StochG_{A,A'}$. For this node we have $v' \in \post(\strat{\Verifier}(v))$ iff $\pr{5}{\strat{\Verifier}(v)}(\pr{0}{v'}, \pr{2}{v'})>0$. 
Note that $\pr{5}{\strat{\Verifier}(v)}$ is a point of the polytope defined by $\couplings{\pr{3}{\strat{\Verifier}(v)}}{\pr{4}{\strat{\Verifier}(v)}}$, 
since polytopes do not contain lines, either $\pr{5}{\strat{\Verifier}(v)}$ is a vertex
or there is a polytope's vertex $w'$ such that  $w'(\pr{3}{\strat{\Verifier}(v)}, \pr{4}{\strat{\Verifier}(v)})>0$ iff $\pr{5}{\strat{\Verifier}(v)}(\pr{3}{\strat{\Verifier}(v)}, \pr{4}{\strat{\Verifier}(v)})>0$. 
Thereby, there is a $v'' \in \vertices{\post(v)}$ such that $\post(v'') = \post(\strat{\Verifier}(v))$, that is, $v'' \in W^k$ implies 
that $\strat{\Verifier}(v) \in W^{k}$ which is a contradiction and so $v \notin W^{k+1}$. 

Let us define a strategy $\strat{\Verifier}$ which is winning strategy in $\mathcal{H}^\StochG_{A,A'}$  for any Verifier's node $v \notin W$. 
If $v \notin W$, then $\strat{\Verifier}(v) = v'$ for some $v' \in \post(v) \cap (V^\StochG \setminus W)$ (which is guaranteed to exist by assumption), 
moreover, if $v \in W$, then $\strat{\Verifier}(v) = v'$ for an arbitrary node $v'$. 
Let us prove that for any play generated by $\strat{\Verifier}$: $v_0, v_1, \dots$ we have $v_i \notin W$, the proof is by induction on $i$.
For $i=0$ it is direct, assuming that $v_i \notin W$ let us prove that $v_{i+1} \notin W$. 
If $v_i$ is a Refuter's node by Def.~\ref{def:W}  
$\post(v_i) \cap W = \emptyset$, and therefore, $v_{i+1} \notin W$. 
If $v_i$ is Verifier's node, by definition of $\strat{\Verifier}$: $v_{i+1} = \strat{\Verifier}(v_i) \notin W$ and therefore the result follows.  
\end{proof} \\

\noindent
\textbf{Proof of Theorem \ref{th:U-and-W}.}
Given a stochastic masking game graph $\mathcal{G}_{A,A'}$ for some PTSs $A$ and $A'$ and the corresponding symbolic game $\SymbG_{A,A'}$. 
For any states $v \in V^\StochG$, $u \in V^\SymbG$ such that $\pr{i}{v} = \pr{i}{u}$ (for $0 \leq i \leq 4$) and for any $k>0$ 
we have that: $v \in U^k$ iff $u \in W^k$. \\
\noindent
\begin{proof}  
The proof is by induction on $k$. For k=1, we have that $W^1 = \{ \ErrorSt \} = U^1$. For the inductive case, consider arbitrary nodes $v \in V^\SymbG$ and $u \in V^{G}$,
such that $\pr{i}{v} = \pr{i}{u}$ for $0 \leq i \leq 4$. Note that these nodes also coincide in their last components, that is, 
either both are Refuter's nodes, Verifier's nodes, or probabilistic nodes. 
Assume these are Refuter's nodes, if $v \in U^{k}$ then $\post(v) \cap U^{k-1} \neq \emptyset$. 
Thus, there is some $v' \in \post(v) \cap U^{k-1}$ which is a probabilistic node. By Def.~\ref{def:U}, we have
many $u' \in \post(u)$ such that $\pr{i}{v'}=\pr{i}{u'}$ (for $0\leq i \leq 4$), and by induction we have that $u' \in W^{k-1}$ 
and therefore $u \in W^k$. 
Similarly, if $u \in W^k$ we have that $\post(u) \cap W^{k-1} \neq \emptyset$, and we proceed as before. 
If $v$ and $u$ are Verifier's nodes the proof is similar. 
Now, assume that  $v$ and $u$ are probabilistic nodes. 
If $v \in U^k$, then $\post(v) \cap U^{k-1} \neq \emptyset$. As proved above, we also have that $\post(u) \cap W^{k-1} \neq \emptyset$. 
Moreover, if $\Eq(v, \post(v) \cap U^{k-1})$ has no solutions, then we have at least a coupling (say $w$) for distributions 
$\pr{3}{v}$ and $\pr{4}{v}$ that satisfies the equations and also $w(\pr{0}{v'},\pr{2}{v'})>0$ for some $v' \in \post(v) \cap U^{k-1}$. 
By Def.~\ref{def:strong_masking_game_graphi}, we have a vertex $u' \in \post(u)$ such that $\pr{5}{u'}=w$, and thus $\sum_{u' \in \post(u) \cap W^{j-1}} \delta^\StochG(u)(u') >0$, which means that that $u' \in W^k$. 
Similarly, if $\sum_{u' \in \post(u) \cap W^{j-1}} \delta^\StochG(u)(u') >0$  then $\Eq(v, \post(v) \cap U^{k-1})$ has no solutions, and then
$u \in W^k$ implies $v \in U^k$.
\end{proof} \\

\noindent
\textbf{Proof of Theorem \ref{theo:det-fairness}.}  Let $\MilestoneG_{A,A'}$ be a stochastic game with milestones for some PTSs $A$ and $A'$ that is almost-sure 
failing for fair Refuter's strategies.  Then:
\[
    \inf_{\strat{\Refuter} \in \Strategies{\Refuter}^{\Memoryless}} \sup_{\strat{\Verifier} \in \Strategies{\Verifier}^{\Memoryless}}  \mathbb{E}^{\strat{\Verifier},\strat{\Refuter}}_{\MilestoneG_{A,A'}, v}[\FMask] 
    = \sup_{\strat{\Verifier \in \Strategies{\Verifier}^{\Memoryless}}} \inf_{\strat{\Refuter \in \Strategies{\Refuter}^{\Memoryless}}}    \mathbb{E}^{\strat{\Verifier},\strat{\Refuter}}_{\MilestoneG_{A,A'}, v}[\FMask] 
    < \infty
\]
Furthermore, the value of the game for memoryless strategies for the Verifier and  fair Memoryless Refuter's strategies is the greatest fixpoint of the following functional $\Bellman$: 
\[
    \Bellman(f)(v) =
    \begin{cases}
           \displaystyle \min \{\upperbound,  \max_{w \in \vertices{\couplings{\pr{3}{v}}{\pr{4}{v}}}} \{\sum_{v' \in \post(v)} w(\pr{0}{v'},\pr{2}{v'})  f(v') \} \}& \text{ if }v \in V^{\SymbG}_\Probabilistic  \\
           \displaystyle \min \{ \upperbound, \max \{\chi_{\mathcal{M}}(\pr{1}{v})  + f(v') \mid v' \in \post(v) \} \} & \text{ if } v \in  V^{\SymbG}_\Verifier, \\
           \displaystyle \min \{ \upperbound,  \min \{\chi_{\mathcal{M}}(\pr{1}{v})  + f(v') \mid v' \in \post(v) \} \} & \text{ if } v \in  V^{\SymbG}_\Refuter, \\
           \displaystyle 0 & \text{ if } v=\ErrorSt.
           \displaystyle 
    \end{cases}
\]
where $\upperbound$ is a number such that 
$\upperbound \geq \inf_{\strat{\Refuter} \in \Strategies{\Refuter}^{\Memoryless\Deterministic}} \sup_{\strat{\Verifier} \in \Strategies{\Verifier}^{\Memoryless\Deterministic}} \Expect{\strat{\Verifier}}{\strat{\Refuter}}_{\MilestoneG_{A,A'}, v}[\FMask]$, for every $v$.

\noindent
\begin{proof}
    First we prove that we can safely restrict to deterministic strategies when computing the value of the game for memoryless strategies.  To do so, 
we prove that for every memoryless strategies $\strat{\Verifier}$ and $\strat{\Refuter}$, there is a memoryless and deterministic strategy
$\strat{\Verifier}'$ such that: $\mathbb{E}^{\strat{\Verifier}',\strat{\Refuter}}_{\MilestoneG_{A,A'}, v}[\FMask]  \geq \mathbb{E}^{\strat{\Verifier},\strat{\Refuter}}_{\MilestoneG_{A,A'}, v}[\FMask]$. To do so, note that any memoryless strategy satisfies the following equation for every $v \in V^\StochG_\Verifier$:
\begin{align}
    \mathbb{E}_{\MilestoneG_{A,A'},v}^{\strat{\Verifier},\strat{\Refuter}}[\FMask]&  \leq \reward(v) + \sum_{v' \in \post(v)} \delta^{\strat{\Refuter},\strat{\Verifier}}(v,v')  \mathbb{E}_{\MilestoneG_{A,A'},v'}^{\strat{\Verifier},\strat{\Refuter}}[\FMask] \label{theo:det-fairness:eq3:l1}\\
    & \leq \reward(v) + \max_{v' \in \post(v)} \{  \mathbb{E}_{\MilestoneG_{A,A'},v'}^{\strat{\Verifier},\strat{\Refuter}}[\FMask] \}
    \label{theo:det-fairness:eq3:l12}
\end{align}
The first inequality  follows from the definition of expected value, the second inequality follows since $ \sum_{v' \in \post(v)} \delta^{\strat{\Refuter},\strat{\Verifier}}(v,v')  \mathbb{E}_{\MilestoneG_{A,A'},v'}^{\strat{\Verifier},\strat{\Refuter}}[\FMask] $ is a convex combination.  That is,  defining $\strat{\Verifier}'(v) = \argmax_{v' \in \post(v)}  \{ \mathbb{E}_{\MilestoneG_{A,A'},v'}^{\strat{\Verifier},\strat{\Refuter}}[\FMask] \}$, for every $v$,  we obtain
$\mathbb{E}_{\MilestoneG_{A,A'},v}^{\strat{\Verifier},\strat{\Refuter}}[\FMask] \leq \mathbb{E}_{\MilestoneG_{A,A'},v}^{\strat{\Verifier}',\strat{\Refuter}}[\FMask]$. 
    Similarly we can prove that for every memoryless strategies $\strat{\Refuter}$ and $\strat{\Verifier}$, there is a memoryless, deterministic and fair strategy 
$\strat{\Refuter}'$ such that $\mathbb{E}_{\MilestoneG_{A,A'},v}^{\strat{\Verifier},\strat{\Refuter}'}[\FMask] \leq \mathbb{E}_{\MilestoneG_{A,A'},v}^{\strat{\Verifier},\strat{\Refuter}}[\FMask]$.  These properties imply that:
\[
    \inf_{\strat{\Refuter} \in \Strategies{\Refuter}^{\Memoryless\Deterministic}}  \sup_{\strat{\Verifier} \in \Strategies{\Verifier}^{\Memoryless\Deterministic}} \mathbb{E}_{\MilestoneG_{A,A'},v'}^{\strat{\Verifier},\strat{\Refuter}}[\FMask] 
    =  \inf_{\strat{\Refuter} \in \Strategies{\Refuter}^{\Memoryless}}  \sup_{\strat{\Verifier} \in \Strategies{\Verifier}^{\Memoryless}} \mathbb{E}_{\MilestoneG_{A,A'},v'}^{\strat{\Verifier},\strat{\Refuter}}[\FMask]
\]
and similarly:
\[
    \sup_{\strat{\Verifier} \in \Strategies{\Verifier}^{\Memoryless\Deterministic}}  \inf_{\strat{\Refuter} \in \Strategies{\Refuter}^{\Memoryless\Deterministic}} \mathbb{E}_{\MilestoneG_{A,A'},v'}^{\strat{\Verifier},\strat{\Refuter}}[\FMask] 
    =  \sup_{\strat{\Verifier} \in \Strategies{\Verifier}^{\Memoryless}}  \inf_{\strat{\Refuter} \in \Strategies{\Refuter}^{\Memoryless}} \mathbb{E}_{\MilestoneG_{A,A'},v'}^{\strat{\Verifier},\strat{\Refuter}}[\FMask].
\]

    Now, we prove the theorem.  We define a restricted (finite) game just taking into account the vertices of the polytope defined by the couplings.  Consider the subgame $\mathcal{H}_{A,A'}$ obtained from $\MilestoneG_{A,A'}$ by restricting the successors of Verifier's vertices to the following
sets:
\begin{itemize}
	\item $\{ ((s, \sigma^2, s', \mhyphen, \mu', \mhyphen, \Verifier), (s, \mhyphen, s', \mu, \mu', w, \Probabilistic)) \mid (\exists\;\sigma \in \Sigma: s \xrightarrow{\sigma} \mu) \wedge   w \in \vertices{\couplings{\mu}{\mu'}}\} \subseteq E^\StochG$ for any $\sigma \notin \faults$,

 	 \item $\{ ((s, \sigma^1, s', \mu, \mhyphen, \mhyphen, \Verifier),(s, \mhyphen, s', \mu, \mu', w, \Probabilistic)) \mid (\exists\;\sigma \in \Sigma: s' \xrightarrowprime{\sigma} \mu' ) \wedge  w \in \vertices{\couplings{\mu}{\mu'}} \} \subseteq E^\StochG$,
	 
	 \item $\{ ((s, F^2, s', \mhyphen, \mu', \mhyphen, \Verifier), (s, \mhyphen, s', \Dirac_s, \mu', w, \Probabilistic)) \wedge w \in \vertices{\couplings{\Dirac_s}{\mu'}} \} \subseteq E^\StochG$ for any $F \in \faults$,
\end{itemize}
That is, we restrict the couplings to the vertices of the polytope $\couplings{\mu}{\mu'}$.  Note that since the set of vertices is finite, the game  $\mathcal{H}_{A,A'}$ is finite.  
We show now that:
\begin{equation}\label{eq:theo:determined:eq2}
    \sup_{\strat{\Verifier} \in \Strategies{\Verifier}^{\Memoryless}} \inf_{\strat{\Refuter} \in \Strategies{\Refuter}^{\Memoryless}} \mathbb{E}_{\mathcal{H}_{A,A'},v'}^{\strat{\Verifier},\strat{\Refuter}}[\FMask]
    \leq 
     \sup_{\strat{\Verifier} \in \Strategies{\Verifier}^{\Memoryless}} \inf_{\strat{\Refuter} \in \Strategies{\Refuter}^{\Memoryless}} \mathbb{E}_{\mathcal{G}_{A,A'},v'}^{\strat{\Verifier},\strat{\Refuter}}[\FMask],
\end{equation}
and:
\begin{equation}\label{eq:theo:determined:eq3}
    \inf_{\strat{\Refuter} \in \Strategies{\Refuter}^{\Memoryless}} \sup_{\strat{\Verifier} \in \Strategies{\Verifier}^{\Memoryless}} \mathbb{E}_{\mathcal{G}_{A,A'},v'}^{\strat{\Verifier},\strat{\Refuter}}[\FMask]
    \leq 
     \inf_{\strat{\Refuter} \in \Strategies{\Refuter}^{\Memoryless}} \sup_{\strat{\Verifier} \in \Strategies{\Verifier}^{\Memoryless}} \mathbb{E}_{\mathcal{H}_{A,A'},v'}^{\strat{\Verifier},\strat{\Refuter}}[\FMask],
\end{equation}
    Note that, by the property proven above, these are equivalent to:
\begin{equation}\label{eq:theo:determined:eq4}
    \sup_{\strat{\Verifier} \in \Strategies{\Verifier}^{\Memoryless\Deterministic}} \inf_{\strat{\Refuter} \in \Strategies{\Refuter}^{\Memoryless\Deterministic}} \mathbb{E}_{\mathcal{H}_{A,A'},v'}^{\strat{\Verifier},\strat{\Refuter}}[\FMask]
    \leq 
     \sup_{\strat{\Verifier} \in \Strategies{\Verifier}^{\Memoryless\Deterministic}} \inf_{\strat{\Refuter} \in \Strategies{\Refuter}^{\Memoryless\Deterministic}} \mathbb{E}_{\mathcal{G}_{A,A'},v'}^{\strat{\Verifier},\strat{\Refuter}}[\FMask],
\end{equation}
and:
\begin{equation}\label{eq:theo:determined:eq5}
    \inf_{\strat{\Refuter} \in \Strategies{\Refuter}^{\Memoryless\Deterministic}} \sup_{\strat{\Verifier} \in \Strategies{\Verifier}^{\Memoryless\Deterministic}} \mathbb{E}_{\mathcal{G}_{A,A'},v'}^{\strat{\Verifier},\strat{\Refuter}}[\FMask]
    \leq 
     \inf_{\strat{\Refuter} \in \Strategies{\Refuter}^{\Memoryless\Deterministic}} \sup_{\strat{\Verifier} \in \Strategies{\Verifier}^{\Memoryless\Deterministic}} \mathbb{E}_{\mathcal{H}_{A,A'},v'}^{\strat{\Verifier},\strat{\Refuter}}[\FMask],
\end{equation}
\ref{eq:theo:determined:eq4} holds since $\post^{\mathcal{H}_{A,A'}}(v) \subseteq \post^{\MilestoneG_{A,A'}}(v)$ for $v \in V^{\mathcal{H}_{A,A'}}_\Verifier$ and
$\post^{\mathcal{H}_{A,A'}}(v) = \post^{\MilestoneG_{A,A'}}(v)$ for $v \in V^{\mathcal{H}_{A,A'}}_\Refuter$.  For proving \ref{eq:theo:determined:eq5}, fix a
fair strategy $\strat{\Refuter} \in \Strategies{\Refuter}^{\Memoryless \Deterministic}$, the optimal strategy for the Verifier in game $\mathcal{G}_{A,A'}$ is attained only
in probabilistic vertices that are vertices of $\couplings{\mu}{\mu'}$, thus the probabilistic vertices of $\mathcal{H}_{A,A'}$, thus 
$  \sup_{\strat{\Verifier} \in \Strategies{\Verifier}^{\Memoryless\Deterministic}} \mathbb{E}_{\mathcal{G}_{A,A'},v'}^{\strat{\Verifier},\strat{\Refuter}}[\FMask]
    \leq 
     \sup_{\strat{\Verifier} \in \Strategies{\Verifier}^{\Memoryless\Deterministic}} \mathbb{E}_{\mathcal{H}_{A,A'},v'}^{\strat{\Verifier},\strat{\Refuter}}[\FMask],$
for any fair and memoryless $\strat{\Refuter}$, \ref{eq:theo:determined:eq5} follows.

    Furthermore,  the value of game $\mathcal{H}_{A,A'}$ is given by  the greatest fixed point of the equations \cite{DBLP:journals/corr/abs-2112-09811}:
\begin{equation}\label{eq:theo:determined:bellman1}
    \Bellman(f)(v) =
    \begin{cases}
           \displaystyle \min \{\upperbound,  \chi_{\mathcal{M}}(\pr{1}{v}) + \sum_{v' \in \post(v)} \delta(v)(v')  f(v') \} & \text{ if } v \in V^{\mathcal{H}_{A,A'}}_\Probabilistic  \\
           \displaystyle \min \{\upperbound,  \max \{\chi_{\mathcal{M}}(\pr{1}{v})  +f(v') \mid v' \in \post(v) \} \} & \text{ if } v \in  V^{\mathcal{H}_{A,A'}}_\Verifier, \\
           \displaystyle \min \{\upperbound,  \min \{\chi_{\mathcal{M}}(\pr{1}{v})  + f(v') \mid v' \in \post(v) \} \} & \text{ if } v \in  V^{\mathcal{H}_{A,A'}}_\Refuter, \\
           \displaystyle 0 & \text{ if } v=\ErrorSt.
           \displaystyle 
    \end{cases}
\end{equation}
    for some $\upperbound \geq \inf_{\strat{\Refuter} \in \Strategies{\Refuter}^{\Memoryless}} \sup_{\strat{\Verifier} \in \Strategies{\Verifier}^{\Memoryless}} \mathbb{E}_{\mathcal{H}_{A,A'},v'}^{\strat{\Verifier},\strat{\Refuter}}[\FMask]$.
    That is, we have:
\begin{equation}\label{eq:theo:determined:eq6}
    \inf_{\strat{\Refuter} \in \Strategies{\Refuter}^{\Memoryless\Deterministic}} \sup_{\strat{\Verifier} \in \Strategies{\Verifier}^{\Memoryless\Deterministic}}  \mathbb{E}^{\strat{\Verifier},\strat{\Refuter}}_{\mathcal{H}_{A,A'}, v}[\FMask] 
    = \sup_{\strat{\Verifier \in \Strategies{\Verifier}^{\Memoryless\Deterministic}}} \inf_{\strat{\Refuter \in \Strategies{\Refuter}^{\Memoryless\Deterministic}}}    \mathbb{E}^{\strat{\Verifier},\strat{\Refuter}}_{\mathcal{H}_{A,A'}, v}[\FMask] 
    < \infty
\end{equation}
  Thus, because of \ref{eq:theo:determined:eq4}, \ref{eq:theo:determined:eq5} and \ref{eq:theo:determined:eq6} we have:
\[
 \inf_{\strat{\Refuter} \in \Strategies{\Refuter}^{\Memoryless}} \sup_{\strat{\Verifier} \in \Strategies{\Verifier}^{\Memoryless}}  \mathbb{E}^{\strat{\Verifier},\strat{\Refuter}}_{\MilestoneG_{A,A'}, v}[\FMask] 
    = \sup_{\strat{\Verifier \in \Strategies{\Verifier}^{\Memoryless}}} \inf_{\strat{\Refuter \in \Strategies{\Refuter}^{\Memoryless}}}    \mathbb{E}^{\strat{\Verifier},\strat{\Refuter}}_{\MilestoneG_{A,A'}, v}[\FMask] 
    < \infty.
\]
    This proves a part of the theorem.
    Now, consider the following functional over the symbolic game:
\[
\Bellman'(f)(v) =
    \begin{cases}
           \displaystyle \min \{ \upperbound, \max_{w \in \vertices{\couplings{\pr{3}{v}}{\pr{4}{v}}}} \{\sum_{v' \in \post(v)} w(\pr{0}{v'},\pr{2}{v'})  f(v') \} \}& \text{ if }v \in V^{\SymbG_{A,A'}}_\Probabilistic  \\
           \displaystyle \min \{ \upperbound, \max \{\chi_{\mathcal{M}}(\pr{1}{v})  +f(v') \mid v' \in \post(v) \} \} & \text{ if } v \in  V^{\SymbG_{A,A'}}_\Verifier, \\
           \displaystyle \min \{ \upperbound, \min \{\chi_{\mathcal{M}}(\pr{1}{v})  + f(v') \mid v' \in \post(v) \} \} & \text{ if } v \in  V^{\SymbG_{A,A'}}_\Refuter, \\
           \displaystyle 0 & \text{ if } v=\ErrorSt.
           \displaystyle 
    \end{cases}
\]
we will prove that this can be used to solve $\Bellman$. First, note that $\Bellman'$ is monotone,  it is defined over a complete lattice $[0,\upperbound]$ and 
it is Scott-complete. Thus, it has a greatest fixpoint.  Let $\nu \Bellman'$ the greatest fixpoint, of $\Bellman',$  we prove that 
$\nu \Bellman(v) = \nu \Bellman'((v[0],v[1],v[2],v[3],v[4]))$, for every $v \in V^{\mathcal{H}_{A,A'}}_\Verifier \cup V^{\mathcal{H}_{A,A'}}_\Refuter$.
For doing so, consider for each symbolic vertex the following mapping:
\begin{itemize}
    \item $\llbracket (s,\sigma,s',\mu,\mu',X) \rrbracket = (s,\sigma,s',\mu,\mu',\mhyphen,X)$, for $X \in \{\Refuter, \Verifier\}$,
    \item $\llbracket (s,\mhyphen, s',\mu,\mu',\Probabilistic) \rrbracket =(s, \mhyphen,s', \mu,\mu', w,\Probabilistic)$,  \\ where
              $w = \argmax_{w \in \vertices{\couplings{\mu}{\mu'}}} \{\sum_{v' \in \post(v)} w(\pr{0}{v'},\pr{2}{v'})  \nu \Bellman'(v') \}$
\end{itemize}
    Similarly, we can define a mapping from concrete vertices to symbolic ones as:
\begin{itemize}
    \item $\llparenthesis (s,\sigma,s',\mu,\mu',Y ,X) \rrparenthesis = (s,\sigma,s',\mu,\mu',X)$, for $X \in \{\Refuter, \Verifier$ and $Y \in \{\mhyphen \} \cup \vertices{\couplings{\mu,\mu'}}$.
\end{itemize}
    Now, we prove that $\alpha(v) = \nu \Bellman'(\llparenthesis v \rrparenthesis)$ is a fixpoint of $\Bellman$. We proceed by cases:
   
   If $v$ is a Refuter's vertex,  then:
\begin{align}
   \Bellman(\alpha)(v) & =  \min \{\upperbound,  \min \{\chi_{\mathcal{M}}(\pr{1}{v})  + \alpha(v') \mid v' \in \post(v) \} \}  \\
                                    & =  \min \{\upperbound,  \min \{\chi_{\mathcal{M}}(\pr{1}{v})  +\nu \Bellman'(\llparenthesis v'  \rrparenthesis) \mid v' \in \post(v) \} \} \\  
                                    & = \nu \Bellman'(\llparenthesis v  \rrparenthesis) \\
                                    & = \alpha(v)           
\end{align}
 The first line is by definition of $\Bellman$, the second line is obtained applying the definition of $\alpha$, the third line is due to suryectivity of  $\llparenthesis \rrparenthesis$ and
 definition of $\Bellman'$ and the fact that  $\nu \Bellman'(\llparenthesis v  \rrparenthesis)$ is a fixed point of $\Bellman'$.
  If $v$ is a Probabilistic vertex,  the proof similar:
\begin{align}
   \Bellman(\alpha)(v) & =  \min \{ \upperbound, \max_{w \in \couplings{\pr{3}{v}}{\pr{4}{v}}} \{\sum_{v' \in \post(v)} w(\pr{0}{v'},\pr{2}{v'})  \alpha(v') \} \} \\
                                    & =   \min \{ \upperbound, \max_{w \in \couplings{\pr{3}{v}}{\pr{4}{v}}} \{\sum_{v' \in \post(v)} w(\pr{0}{v'},\pr{2}{v'})  \nu \Bellman'(\llparenthesis v' \rrparenthesis) \} \} \\  
                                    & = \nu \Bellman'(\llparenthesis v  \rrparenthesis) \\
                                    & = \alpha(v)           
\end{align}

    If $v$ is a Verifier's vertex, then:
\begin{align}
   \Bellman(\alpha)(v) & =   \min \{ \upperbound, \max \{\chi_{\mathcal{M}}(\pr{1}{v})  + \alpha(v') \mid v' \in \post(v) \} \}  \\
                                    & =   \min \{ \upperbound, \max_{w \in \vertices{\couplings{\pr{3}{v}}{\pr{4}{v}}}} \{\sum_{v' \in \post(v)} w(\pr{0}{v'},\pr{2}{v'})  \nu \Bellman'(\llparenthesis v' \rrparenthesis) \} \} \\  
                                    & = \nu \Bellman'(\llparenthesis v  \rrparenthesis) \\
                                    & = \alpha(v)           
\end{align}
    Hence, $\alpha$ is a fixpoint of $\Bellman$. Furthermore,  we prove that it is greatest one. Assume for the sake of contradiction that there is some
    $\alpha'$ such that is a fixpoint of $\Bellman$ and $\alpha'(v) \geq  \alpha(v)$ for every $v$, and $\alpha'(v') >  \alpha(v')$ for some $v'$.  
    We can define $\beta(v) = \alpha'(\llbracket v \rrbracket)$, as above we can prove that it is a fixpoint of $\Bellman'$ and, furthermore, for every 
    symbolic vertex we have $\beta(v) =  \alpha'(\llbracket v \rrbracket) \geq \alpha(\llbracket v \rrbracket) = \nu \Bellman'(\llparenthesis \llbracket v \rrbracket \rrparenthesis)
    = \nu \Bellman'(v)$, and similarly we can prove that there is a $v'$ such that $\beta(v') >  \nu \Bellman'(v)$, which is a contradiction since $\nu \Bellman'$ is 
    the greatest fixpoint of $\Bellman'$.
\end{proof}\\

\noindent
\textbf{Proof of Theorem \ref{theo:decide-stopping}}
Given a masking game $\StochG_{A,A'}$ and its symbolic version $\SymbG_{A,A'}$,  we have that 
 $\StochG_{A,A'}$ is stopping under fairness  iff
$
  \InitVertex \in V^\SymbG \setminus {\SymbEFairpre}^*(V^\SymbG \setminus {\SymbAFairpre}^*(\ErrorSt)),
$
where $\InitVertex$ is the initial state of $\SymbG_{A,A'}$ and $V^\SymbG$ its sets of vertices.

\noindent
\begin{proof} 
    Consider the game $\mathcal{H}_{A,A'}$ as defined in the proof of Theorem \ref{theo:det-fairness}.  First, we prove that 
the game $\StochG_{A,A'}$ is almost-sure failing for fair Refuter's strategies iff $\mathcal{H}_{A,A'}$ is too almost-sure failing for fair Refuter's strategies.  
This is equivalent to prove that $\inf_{\strat{\Verifier}} \Prob{\strat{\Verifier}}{\strat{\Refuter}}_{\StochG_{A,A'}, \InitVertex}(\Diamond \ErrorSt)=1$ 
iff $\inf_{\strat{\Verifier}} \Prob{\strat{\Verifier}}{\strat{\Refuter}}_{\mathcal{H}_{A,A'}, \InitVertex}(\Diamond \ErrorSt)=1$ for every strategy memoryless  
and fair $\strat{\Refuter}$. Now, note we have:
\begin{align*}
     \Prob{\strat{\Verifier}}{\strat{\Refuter}}_{\mathcal{H}_{A,A'}, v}(\Diamond \ErrorSt) & =\min\{ \sum_{v' \post(v)} w(\pr{0}{v'}, \pr{2}{v'})\Prob{\strat{\Verifier}}{\strat{\Refuter}}_{\mathcal{H}_{A,A'}, v'}(\Diamond \ErrorSt) \mid w \in \couplings{\pr{3}{v}}{\pr{4}{v}} \}\\
         & = \min\{ \sum_{v' \post(v)} w(\pr{0}{v'}, \pr{2}{v'})\Prob{\strat{\Verifier}}{\strat{\Refuter}}_{\mathcal{H}_{A,A'}, v'}(\Diamond \ErrorSt) \mid w \in\vertices{\couplings{\pr{3}{v}}{\pr{4}{v}}} \}
\end{align*}
Thus, we have a deterministic and memoryless $\strat{\Verifier}'$ such that:
\[
\Prob{\strat{\Verifier}'}{\strat{\Refuter}}_{\mathcal{G}_{A,A'}, v}(\Diamond \ErrorSt) = \inf_{\strat{\Verifier} \in \Strategies{\Verifier}^{\Memoryless}} \Prob{\strat{\Verifier}}{\strat{\Refuter}}_{\StochG_{A,A'}, \InitVertex}(\Diamond \ErrorSt),
\]
this strategy only selects probabilistic vertices in $V^{\mathcal{H}}_{\Probabilistic}$, and therefore, $\strat{\Verifier}'$ is a strategy
in $\mathcal{H}_{A,A'}$. Noting that  the Markov chains $\StochG^{\strat{\Verifier}',\strat{\Refuter}}$,$ \mathcal{H}^{\strat{\Verifier}',\strat{\Refuter}}$
are the same for every strategy $\strat{\Refuter}$, we have: 
$\inf_{\strat{\Verifier}} \Prob{\strat{\Verifier}}{\strat{\Refuter}}_{\StochG_{A,A'}, v^{\mathcal{H}}_0}(\Diamond \ErrorSt)
= \inf_{\strat{\Verifier}} \Prob{\strat{\Verifier}}{\strat{\Refuter}}_{\mathcal{H}_{A,A'}, \InitVertex}(\Diamond \ErrorSt)$.

    Now, we prove that we can check wether the game $\mathcal{H}$ is almost-sure failing under fairness using the symbolic game. We define the following sets 
over this game:
\begin{align*}
  \EFairpre(C) = {}&\{ v \in V^{\mathcal{H}} \mid\exists v' \in C : (v,v') \in E^\mathcal{H}_{A,A'} \} \\
  \AFairpre(C) = {}&\{ v \in V^{\mathcal{H}}_\Probabilistic \mid \delta(v,C)>0\} \\
                       & \cup \{ v \in  V^{\mathcal{H}}_\Verifier \mid \forall v' {\in} V^{\mathcal{H}} : (v,v') \in E^{\mathcal{H}}_{A,A'} \Rightarrow v' {\in} C \} \\
                     & \cup \{v \in V^{\mathcal{H}}_\Refuter \mid \exists v'{\in} V^{\mathcal{H}} : (v,v') \in E^\mathcal{H}_{A,A'} \} 
\end{align*}

As proven in \cite{DBLP:journals/corr/abs-2112-09811} (Theorem 3) we have that: 
 $\Prob{\strat{\Verifier}}{\strat{\Refuter}}_{\mathcal{H},v}(\Diamond \ErrorSt) = 1$ for every
 $\strat{\Verifier} \in \Strategies{\Verifier}$ and fair $\strat{\Refuter} \in \Strategies{\Refuter}$
  iff $v \in V\setminus \EFairpre^*(V \setminus \AFairpre^*(\ErrorSt))$.
    
Now, we prove that for all $v \in V^\mathcal{H}$ and $v' \in V^\SymbG$ 
such that $\pr{i}{v} = \pr{i}{v'}$ for $0 \leq i \leq 4$, we have:
$v \in  \AFairpre^n(\ErrorSt)$ iff $v' \in {\SymbAFairpre}^n(\ErrorSt)$, for every $n$. 
The proof is by induction on $n$. The base case is direct. The inductive
cases for Refuter's  nodes and Probabilistic nodes are also direct (as the predecessors in both games are the same  for those vertices module dummy notation). For Verifier's nodes we proceed as follows. If $v \in  \AFairpre^n(\ErrorSt)$, then
for all $t \in \post(v)$ we have $\delta(t)( \AFairpre^{n-2}(\ErrorSt))>0$, that is, $\Eq(t')({\SymbEFairpre}^{n-2}(\ErrorSt))$ has no solutions (with $t'$ satisfying $\pr{i}{t} = \pr{i}{t'}$ for $0 \leq i \leq n$ ). 
Thus, $t' \in {\SymbAFairpre}^{n-1}(\ErrorSt)$, and since $t'$ is the unique successor of $v'$ we have that $v' \in {\SymbAFairpre}^n(\ErrorSt)$, the other direction is similar.

	Now, given sets $C \subseteq V^\mathcal{H}$ and $C^* \subseteq V^\StochG$	 with $C^* = \{ v \mid \exists v' \in C : \forall 0 \leq i \leq 4 : \pr{i}{v} = \pr{i}{v'}\}$.  We prove
that, for every $u \in V^\mathcal{H}$ and $u' \in V^\SymbG$ such that $\pr{i}{u} = \pr{i}{u'}$ for every $0 \leq i \leq 4$, we have $u \in \EFairpre^n(C)$ iff $u' \in \EFairpre^n(C^*)$. 
The proof is by induction on $n$, for $n=0$ it is direct. For $n>0$, assume $w \in \EFairpre^n(C)$,  for Refuter's (or Verifier's) vertices the proof is direct, since in both games they have the same successors (up to removal of dummy notation).  If $u$ is a Probabilistic vertex and $u \in \EFairpre^n(C)$, then there is $t \in C$ such that 
$t \in \post(u)$, that is, $\pr{5}{u}(\pr{3}{t})(\pr{4}{t}) > 0$. But then  $\pr{0}{t} \in \support{\pr{3}{u}}$ and $\pr{2}{t} \in \support{\pr{4}{u}}$ which
implies that $u' \in {\SymbEFairpre}^n(C)$. The other direction is similar, but noting that if we have $\pr{0}{t} \in \support{\pr{3}{u}}$ and $\pr{2}{t} \in \support{\pr{4}{u}}$, then
we can construct a coupling relating distributions $\pr{3}{u}$ and $\pr{4}{u}$.

    Thus, we have that $\InitVertex \in V\setminus \EFairpre^*(V \setminus \AFairpre^*(\ErrorSt))$ iff   $\InitVertex \in V^\SymbG \setminus {\SymbEFairpre}^*(V^\SymbG \setminus {\SymbAFairpre}^*(\ErrorSt))$, from there the theorem follows.
\end{proof}

\end{document}